\newcommand{\ds}{\displaystyle}
\newcommand{\fr}{\frac}
\newcommand{\pr}{\partial}
\newcommand{\Ex}{\mathbb{E}}
\newcommand{\mQ}{\mathbb{Q}}
\newcommand{\mP}{\mathbb{P}}
\newcommand{\filt}[1]{\mathcal{F}_{#1}}
\newcommand{\VIX}{V}
\newcommand{\iVIX}{\widetilde \VIX}
\newcommand{\vF}{\hat F^{\VIX}}
\newcommand{\ExQ}[1]{\Ex^\mQ\left[\left.{#1}\right|\filt{t}\right]}
\newcommand{\R}{\mathbf{R}}
\newcommand{\Rp}{\R_+}
\newcommand{\qv}[2]{\left\langle{#1},{#2}\right\rangle}
\newcommand{\vX}{{\bf{X}}}
\newcommand{\vx}{{\bf{x}}}
\newtheorem{thm}{Theorem}[section]
\newtheorem{prop}[thm]{Proposition}
\newtheorem{lem}[thm]{Lemma}
\newtheorem{cor}[thm]{Corollary}
\theoremstyle{definition}
\newtheorem{defn}[thm]{Definition}
\newtheorem{condition}[thm]{Condition}
\newtheorem{assumption}[thm]{Assumption}
\theoremstyle{remark}
\newtheorem{remark}{Remark}[section]
{
   \end{minipage}
   \vspace*{\stretch{3}}
   \clearpage
}
\begin{document}
\title{A Market Model for VIX Futures}
\author{Alexander Badran and Beniamin Goldys}
\date{\today}
\maketitle

\begin{abstract}

%A new modelling approach that directly prescribes dynamics to the term structure of VIX futures is proposed in this paper.  The main result is the derivation of necessary conditions for there to be no arbitrage between the joint market of VIX and equity derivatives.  The conditions are stated in Theorem~\ref{thm:VFM:EMM1} and Theorem~\ref{thm:VFM:EMM2}.  The theorems also address a fundamental open problem related to models that directly prescribe dynamics to the VIX.  The paper is concluded with an application of the main result, which demonstrates that, when modelling the VIX futures directly, the drift and diffusion of the corresponding stochastic volatility model must be restricted to preclude arbitrage.

A new modelling approach that directly prescribes dynamics to the term structure of VIX futures is proposed in this paper.  The approach is motivated by the tractability enjoyed by models that directly prescribe dynamics to the VIX, practices observed in interest-rate modelling, and the desire to develop a platform to better understand VIX option implied volatilities.  The main contribution of the paper is the derivation of necessary conditions for there to be no arbitrage between the joint market of VIX and equity derivatives.  The arbitrage conditions are analogous to the well-known HJM drift restrictions in interest-rate modelling.  The restrictions also address a fundamental open problem related to an existing modelling approach, in which the dynamics of the VIX are specified directly. The paper is concluded with an application of the main result, which demonstrates that when modelling VIX futures directly, the drift and diffusion of the corresponding stochastic volatility model must be restricted to preclude arbitrage.  %The conditions are similar to the well-known HJM conditions in interest-rate modelling.

%The remainder of the thesis is concerned with VIX and equity option and implied volatility surfaces.  Several original representations concerning option and implied volatility surfaces are presented.  The thesis is concluded with a discussion of the open problems and comments regarding future potential research.
%
\end{abstract}
\emph{Keywords: VIX, VIX Futures, VIX Options, Market Model, HJM}
\tableofcontents
\newpage
\label{chpt:VFM}
\section{Introduction}
The Chicago Board of Options Exchange Volatility Index, which is more commonly known as the VIX, is an approximation of the markets' expectation of index volatility over a 30-day time period.
Derivatives on the VIX provide market participants with a mechanism to invest in the markets' expectation of volatility, without the need for purchasing index options.  Investors can gain exposure to volatility through the purchase of VIX futures or VIX options, which have become increasingly popular in recent years. In 2004, futures on the VIX began trading and were subsequently followed by options on the VIX in 2006.  Since products are traded on both the underlying index and the VIX, it is desirable to employ a model that can simultaneously reproduce the observed characteristics of products on both indices, while remaining free from arbitrage.

In this paper, a new modelling approach that directly prescribes dynamics to the term structure of VIX futures is proposed.  The approach is motivated by the tractability enjoyed by models that directly prescribe dynamics to the VIX, practices observed in interest-rate modelling and the desire to develop a platform to better understand VIX option implied volatilities. In the existing literature, many solutions have been proposed for the joint-modelling task.  The existing approaches can be categorised according to the assumptions made regarding the market of traded instruments.  All existing models can generally be placed into one of the following categories:\\
\indent (a) The underlying index is the primary traded security.  Index options, VIX futures and VIX options are priced relative to the underlying index.\\
\indent (b) The underlying index and a continuum of variance swaps are the primary traded securities.  Index options, VIX futures and VIX options are priced relative to these products.\\
\indent (c) The VIX is modelled directly.  VIX futures and VIX options are priced relative to the VIX.

For models that belong to category (a), the dynamics for the underlying index are specified under a pricing measure and the discounted price of a derivative is expressed as a local martingale.  The square of the VIX is defined as the expected realised variance of the index and the discounted price of a derivative on the VIX is expressed as a local martingale under the same measure.

The majority of the literature on VIX derivatives fall into this category.  \cite{ZhangZhu06} derived an expression for VIX futures assuming \cite{Heston93} stochastic volatility dynamics.  \cite{Lin07} presented an approximation formula for VIX futures based on a convexity correction, which was then used to price VIX futures when the S\&P500 is modelled by a Heston diffusion process with simultaneous jumps in the underlying index and the volatility process (SVJJ).  A more general result was presented in \cite{ZhuLian11}, who assumed the same dynamics for the S\&P500 as \cite{Lin07} and derived an exact formula to price a VIX futures contract.   The literature on VIX options is generally similar to that of VIX futures: affine stochastic-volatility dynamics are usually assumed for the underlying index, which enable some tractability to be retained when deriving option pricing formulae. A square-root stochastic variance model with variance jumps and time-dependent parameters was considered for the evolution of the S\&P500 index in \cite{Sepp08b}, while option pricing formulae under the SVJJ dynamics were presented in \cite{LianZhu11}.  \cite{Sepp11} and \cite{PaSi13} attempted to capture empirically observed features of the implied skew for options on the VIX.  The former considered a range of parametric and non-parametric models, while the latter employed a regime-switching stochastic-volatility model.  A non-affine 3/2 plus jumps model was considered in \cite{BB}.  Pricing formulae were derived for VIX derivatives and numerical results illustrated that the 3/2 model is a good candidate for the joint modelling of VIX and equity derivatives.

For models belonging to the second approach, it is assumed that variance swap markets are liquid enough to justify the use of variance swaps as a model input.  The extended market enables models to better capture term-structure features observed in both variance and volatility markets.  Similar to the previous class of models, the VIX is defined in terms of the primary instruments and the discounted price of a derivative on the VIX is expressed as a local martingale under the same pricing measure that is used for pricing derivatives on the index.  See \cite{Bergomi05}, \cite{Bergomi08}, \cite{Buehler06} and \cite{Gath08} for a further discussion of models in this class.

A first step in proposing any model is to make assumptions regarding the class of traded instruments.  In approaches (a) and (b), it is assumed that index instruments and possibly variance swaps are liquidly traded.  A model is assumed for the dynamics of the liquidly traded instruments, which enables an expression for the VIX to be derived.  By construction, these approaches guarantee that the assumed dynamics for the VIX are consistent with those assumed for the underlying securities.  Hedging VIX derivatives in this framework, however, is a non-trivial exercise.  Unlike other volatility related products, the VIX is not traded and it cannot be statically replicated, due to the non-linear transformation used in it's definition, making hedging in practice a non-trivial exercise in the setup of approaches (a) and (b).
%A market price of risk is typically introduced to compensate an investor for the non-diversifiable risks associated with the spot, however, since the VIX is a non-linear function of traded instruments, the precise nature of the non-diversifiable risks is not immediately obvious.

A related modelling approach to the one proposed in this paper is that of category (c), in which the dynamics of the VIX are specified directly. Dynamics are assumed under a pricing measure and derivatives are priced as discounted expectations of their future payoffs.
There are several examples of this approach in the literature.  \cite{GrLo96} considered a mean-reverting square-root process for the evolution of the VIX and presented closed-form pricing formulae for VIX derivatives. \cite{PsDM10} concluded that a mean-reverting logarithmic diffusion with jumps is supported by VIX time-series data and also derived closed-form formulae for VIX derivatives. A variety of model specifications were considered by \cite{KaAl10}.  The authors evaluated the performance of a wide range of models for risk management and derivatives pricing applications. An empirical analysis of one-dimensional diffusions for the VIX was performed in \cite{GoardMaz12} and the authors concluded that a pure-diffusion 3/2 model is best suited to capture the dynamics of the VIX.  Derivatives were then priced relative to the VIX under such a specification.  \cite{DF12} attempted to replicate the concept of a local-volatility surface, which originated in \cite{Dupi93}, for VIX derivatives under the assumption of linear mean-reverting dynamics.  The authors justify ignoring the dynamics of the underlying index with the claim that the VIX market is mature enough for the pricing and hedging of VIX options relative to VIX futures, which is the market practice.

There are several complexities associated with these models.  The first challenge is in ensuring that the joint market between the underlying index and derivatives on the VIX is free from arbitrage.  To ensure that the markets are arbitrage free requires the derivation of restrictions on the dynamics of the VIX, which is a problem that has not been addressed in the literature.  The derivation of restrictions that ensure no dynamic arbitrage is a well-known problem in other areas of finance.  In interest-rate modelling, the HJM drift conditions (\cite{HeJM92}) ensure that there is no arbitrage when forward rates are modelled directly.  Variance curve models for variance swap markets are analogous to forward-rate models for interest-rate markets. \cite{Buehler06} derived variance curve arbitrage conditions as well as addressing the problems of finite-dimensional realisations and model consistency. Many attempts have been made to produce similar results for option market models by directly prescribing dynamics to Black-Scholes implied volatilities (see for example \cite{Scho99}, \cite{BGKW01} and \cite{ScWi08}).  The situation is much more complex for options, however, due to the higher dimensionality of the state space and the non-linearity of Black-Scholes implied volatilities.  The main contributions of this paper are Theorem~\ref{thm:VFM:EMM1} and Theorem~\ref{thm:VFM:EMM2}, which state necessary conditions for there to be no arbitrage between the joint market of VIX and equity derivatives. % The first major contributions of this paper are Theorem~\ref{thm:VFM:EMM1} and Theorem~\ref{thm:VFM:EMM2}, in which the appropriate no-arbitrage conditions are derived.

Another complexity associated with this modelling approach is in the appropriate specification of a {market price of risk}.  Since the VIX is not tradable and cannot be replicated, the usual relationships that connect a derivative to its underlying are not typically observed.  VIX futures are not restricted by traditional cost-of-carry relationships and VIX options violate put-call-parity relationships when compared to the spot. % The pricing of derivatives on an index that is not tradeable has been studied in commodity markets and interest-rate markets for many years, however, the situation for the VIX is somewhat unique, since the stochastic nature of the VIX can be fully explained by the underlying index and its derivatives.
By modelling VIX futures directly, as opposed to the VIX itself, complexities involved with the appropriate choice of the market price of risk are avoided.  This is similar to the comparison of short-rate models to forward-curve models in interest-rate modelling.

The final motivating factor for modelling VIX futures, as opposed to the VIX directly, is the concept of VIX option implied volatility.  To properly understand mathematical features of VIX option implied volatilities, a framework that connects the dynamics of the VIX futures to the underlying index is required.  \cite{CoHo05} and \cite{Rope10} provide a comprehensive discussion of no-arbitrage restrictions for traditional option and implied volatility surfaces. To do so, the authors relied on the notion of an equivalent martingale measure.  In order to extend these concepts to the joint market of VIX and equity derivatives, necessary conditions for the existence of such a measure are required. %hese conditions are derived in Section~\ref{sect:VFM:Consistency}.

The remainder of this paper is set out as follows.  Market conventions, the modelling framework and mathematical preliminaries for the paper are specified in Section~\ref{sect:VFM:Preliminaries}.  In Section~\ref{sect:VFM:DerivedDynamics}, a general semi-martingale representation for the VIX index is derived.  The dynamics are stated in Proposition~\ref{prop:VIXSDE:VIXSDE}.  The representation is quite tedious and an alternate form of the diffusion term in a simplified setting is provided in Corollary~\ref{cor:VMM:diffusion}.  Section~\ref{sect:VFM:ImpliedDynamics} is concerned with the implications of modelling the term structure of VIX futures directly.  The dynamics of a process that represents the VIX as implied by the family of equations for the VIX futures is derived in Proposition~\ref{prop:VFM:VIXdiffusion}.
 %In Section~\ref{sect:VFM:Setup} the assumed dynamics for the evolution of the traded securities are stated.
Section~\ref{sect:VFM:Consistency} is concerned with the implications of the joint modelling of the underlying index and the term structure of VIX futures.  Restrictions on the dynamics stated in Section~\ref{sect:VFM:DerivedDynamics} and Section~\ref{sect:VFM:ImpliedDynamics} are derived so that there is no arbitrage between the joint market of VIX and equity derivatives.  The restrictions are stated in Theorem~\ref{thm:VFM:EMM1}.  The paper is concluded with Section~\ref{sect:VFM:Applications}, where an application of the main theorem is provided. The application demonstrates that by modelling the VIX futures directly, the drift and diffusion of the corresponding stochastic volatility model must be restricted to preclude arbitrage.

\section{Preliminaries}
\label{sect:VFM:Preliminaries}
Market conventions and the modelling framework adopted throughout the paper are described in this section.  The market definition of the VIX is first stated, followed by the theoretical definition of the VIX and the assumptions that are made regarding the dynamics of the traded instrument.
\subsection{Market Conventions}
%%\label{chpt:RERV:MarketDef}
%
    The VIX attempts to provide investors with a model-free measure of the markets' expectation of 30-day volatility of the S\&P500 index.
    The market definition of the VIX is based on a representation of expected realised variance in terms of option contracts and is stated below.
    %The term arising due to jumps in the underlying index is omitted from the definition, that is, it is assumed that $\ERV^D\equiv0$.

\begin{defn}
\label{defn:Defn:MarketDef}
    The \textit{CBOE VIX} is calculated using the formula
         \begin{equation}
                		\widetilde{VIX}_t^{mrk}(T) := \sqrt{\left[\fr{2e^{r(T-t)}}{T-t}\sum_{j\in I} \Theta_j\fr{\Delta k_j}{k_j^2}-\fr{1}{T-t}\left(\fr{F_{PCP}}{k_0}-1\right)^2\right]}\times 100;\label{eqn:Defn:CBOEVIX}
        \end{equation}
            where
            \begin{itemize}
            		\item[$T-t$:] is the time horizon (typically 30 days).
            		\item[$r$:] is the risk-free interest rate that applies from time $t$ till time $T$.
            		\item[$k_0$:] is the ``at-the-money" strike and is given by the strike that minimizes the difference between put and call option prices at time $t$ with expiry at time $T$.
            		\item[$k_j$:] is the $j^{th}$ strike and $k_j<k_{j+1}$.
            		\item[$\Theta_j$:] is the price of an out-of-the-money option at time $t$ with strike $k_j$ and expiry at time $T$, computed as the average of the bid-ask spread. If $j<0$ put options are used, if $j>0$ call options are used and for $j=0$ the average of the put and call price is used.
            		\item[$I$:] is the set of all $j$, ordered by strike, for which quoted strikes exist in the market with the provisos that:
            					\begin{itemize}
            							\item if the bid price for $\Theta_j$ is zero, the $j\notin I$;
            							\item the summation stops if two consecutive zero bid prices are met.
            					\end{itemize}
            		\item [$\Delta k_j$:] is the central symmetric difference $\fr{1}{2}(k_{j+1}-k_{j-1})$, except for the first and last strikes in the sum where a one-sided difference, whichever of $k_j-k_{j-1}$ or $k_{j+1}-k_j$ is appropriate, is used.
            		\item [$F_{PCP}$:] is the forward index level, computed using \[F_{PCP}:=k_0+e^{r(T-t)}(C_0-P_0),\]
            					where $C_0$ and $P_0$ are the mid-prices of the call and the put options at time $t$ with strike $k_0$ and expiry at time $T$.
            \end{itemize}
In the event that no options have exactly 30 days till expiration, as is ordinarily the case,
the CBOE interpolates between the CBOE VIX squared calculated at the two closest maturities, that is,
\begin{equation}
		VIX^{mrk}_t := \sqrt{\fr{365}{30}\left((T_1-t)\widetilde{VIX}_t^2(T_1)\fr{N_{T_2}-30}{N_{T_2}-N_{T_1}}\right.
                        \left.+(T_2-t)\widetilde{VIX}_t^2(T_2)\fr{30-N_{T_1}}{N_{T_2}-N_{T_1}}\right)}
                         \label{eqn:Defn:PortfolioVIX}
\end{equation}
where $t<T_1<t+30\mbox{ days}<T_2$ and $N_{T}$ denotes the number of days from $t$ till time $T$. Since the CBOE VIX is always defined for a 30-day time horizon, the dependence on $T$ is omitted.
\end{defn}
%
%\textbf{The following remark seems out of place. Perhaps delete or move.}

The VIX is defined using liquidly quoted index options, however, the index itself is not tradable.  Investors are able to take a position on the value of the VIX index via VIX futures and European options on the VIX, which are liquidly traded instruments. VIX futures and VIX options deliver a cash settlement amount that is related to the value of the VIX index at expiry.  Traditional relationships that are usually observed between an option and its underlying do not hold for VIX options, as a consequence of the fact that the VIX index is not tradable.  The relationships, however, are observed between VIX options and the VIX futures.

For more information on the market definition of the VIX and on VIX derivatives, the reader is referred to the \cite{CBOE09} White Paper.
Since the definition of the VIX is rather cumbersome, simplifications are adopted in the literature to enable mathematical tractability.  In the following section, the definitions adopted in this paper and the simplifications often observed in the literature are presented.

\label{sect:VFM:Setup}
\subsection{Theoretical Definition of the VIX}

In this paper, the VIX is defined in terms of European options.  This is done to avoid a potential loss in generality, since there are no implicit model assumptions associated with such a definition.  The definition is independent of any assumptions regarding a risk-neutral measure and absence of arbitrage.  Upon making additional assumptions, a representation in terms of expected realised variance may be recovered.  Until such assumptions are made, however, the representation may not hold.  This is subtle point and it is illustrated in the forthcoming Lemma~\ref{cor:3on2:VIXasLog}.
\begin{remark}
The convention throughout is to use a circumflex to denote processes that are in terms of a fixed expiration date, while no accent is used to denote processes that are in terms of time-till expiry.
\end{remark}

\begin{remark}
     For the following definition, further assumptions are required to ensure that $\hat\VIX_t$ is well behaved.  For example, out-of-the-money put options need to decay fast enough as $k\searrow0$ to ensure that $\hat\VIX_t$ does not explode.  Additional assumptions are generally made to ensure that $\hat\VIX_t$ is well defined.
\end{remark}

\begin{defn}
    \label{defn:Defn:VIXContKOptionsFixT}
     For all $T>t$, let
     \begin{equation}
              \hat\VIX_t(T) :=\sqrt{\fr{1}{T-t}\int_0^\infty\fr{\hat\Theta_t(k,T;\hat F_t(T))}{k^2}\,m(dk)},\quad\forall\, 0\,\le t<T,
    \end{equation}
    where $\hat F_t(T)$ is an index future at time $t$ expiring at time $T$, $\hat\Theta_t(k,T;\hat F_t(T))$ is an out-of-the-money index option with strike $k$ and maturity $T$, and $m(\cdot)$ is an unspecified measure.
\end{defn}
\begin{remark}
    The behaviour of $\hat\VIX_t(T)$ as $T\searrow t$ is dependent on the behaviour of $\hat\Theta_t(\cdot,T;\hat F_t(T))$ as $T\searrow t$.  While being of mathematical importance, this limiting behaviour is not investigated in this paper, since the VIX is defined for fixed $T>t$.
\end{remark}

\begin{defn}
    \label{defn:Defn:VIXContKOptions}
    For all $t\ge0$ and for fixed $\tau^*>0$, the \textit{VIX} is given by
          \begin{equation}
                      \VIX_t:=\hat\VIX_t(t+\tau^*),\quad\forall t>0. \label{eqn:Defn:VIXContKOptions}
          \end{equation}
    Since $\tau^*$ is a fixed constant throughout, there is no loss in generality incurred by suppressing it as an argument to the process $V$. % The multiplier of 100 is also omitted for notational convenience.
    For simplicity, the dependence of the futures contract on $t+\tau^*$ is suppressed in the forthcoming analysis.
\end{defn}
A different choice of $m(\cdot)$ can be made depending on the specific application.  In particular, to recover a discrete-strike definition of the VIX, one would choose some form of discrete measure, whereas for a continuous-strike setup, one would chose the Lebesgue measure. The following definition is motivated by the continuous-strike definition of the VIX, which is a common simplification observed in the literature.

\begin{defn}\label{def:Intro:CS}
    Suppose that the measure $m(\cdot)$ is chosen as
    \begin{equation*}
        m(\cdot) = N\times m_L(\cdot),
    \end{equation*}
    where $N\equiv2\times100^2$ and $m_L(\cdot)$ denotes the Lebesgue measure.  Then the VIX is given by
    \begin{equation*}
        \VIX_t=\sqrt{\fr{N}{\tau^*}\int_0^\infty\fr{\Theta_t(k,t+\tau^*; F_t(t+\tau^*))}{k^2}\,dk}
    \end{equation*}
    and this quantity is referred to as the \textit{continuous-strike} definition of the VIX.  
\end{defn}

Under the specification given in Definition~\ref{def:Intro:CS} and upon making addition assumptions regarding the underlying index, a simpler representation for the VIX can be derived.  Such a representation is the typical starting point for most models and is the subject of the following lemma.
\begin{lem}
\label{cor:3on2:VIXasLog}
Suppose that the measure $m(\cdot)$ is chosen as specified in Definition~\ref{def:Intro:CS}.  Further suppose that there exists a measure $\mQ$, such that index futures and options are $\mQ$-martingales.  Then the VIX can be written as
\begin{equation}% \label{eqdefVIX}
\VIX_t = \sqrt{- \fr{N}{\tau^*} \ExQ{\ln \left( \frac{F_{t+\tau^*}}{F_t} \right)}}.
\end{equation}
\begin{proof}

    For fixed $t\ge 0$, fix $\omega\in\Omega$.  Since $F_{t+\tau^*}(\omega)\in\R_+,\,\,\forall\omega\in\Omega$, for $f(s)=\ln(s)$ and $X=F_{t+\tau^*}(\omega)$, Lemma~\ref{lem:Defn:BR} yields
\begin{align*}
    \ln(F_{t+\tau^*}(\omega))={}&\ln(F_t)+\fr{1}{E}(F_{t+\tau^*}(\omega)-E)\\
            {}&-\int_E^\infty\fr{(F_{t+\tau^*}(\omega)-k)^+}{k^2}\,dk-\int_0^E\fr{(k-F_{t+\tau^*}(\omega))^+}{k^2}\,dk.
\end{align*}
Since $\omega$ is arbitrary, the above equation is true for any $\omega\in\Omega$.  Taking conditional expectations with respect to $\filt{t}$ under the measure $\mQ$ and rearranging gives
\begin{align}
  \int_E^\infty\fr{1}{k^2}\,C_t(k,t+\tau^*;F_t)\,dk+{}& \int_0^E\fr{1}{k^2}\,P(k,t+\tau^*;F_t)\,dk \nonumber \\{}&=-\ExQ{\ln(F_{t+\tau^*})}+\ln(F_t)+\fr{1}{E}(F_{t}-E).\label{eqn:Intro:proofVIXasLog}
\end{align}
Evaluating the above equation for $E=F_t$ and multiplying both sides by $2\times100^2/\tau^*$ completes the proof.
\end{proof}
\end{lem}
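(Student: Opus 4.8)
The plan is to reduce the claim to a classical static-replication identity for $\ln(\cdot)$, applied pathwise and then averaged under $\mQ$. I would invoke the spanning formula of Lemma~\ref{lem:Defn:BR} (a Carr--Madan-type decomposition) with $f(s)=\ln s$, so $f'(s)=1/s$ and $f''(s)=-1/s^2$: for a fixed expansion point $E>0$ and any $X>0$,
\[
\ln X = \ln E + \fr1E(X-E) - \int_E^\infty\fr{(X-k)^+}{k^2}\,dk - \int_0^E\fr{(k-X)^+}{k^2}\,dk.
\]
First I would fix $t\ge0$ and apply this identity with $X=F_{t+\tau^*}(\omega)$, which is legitimate for a.e.\ $\omega$ because $F_{t+\tau^*}>0$; since $\omega$ ranges over a full-measure set, this becomes an identity between random variables.

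Next I would take $\ExQ{\,\cdot\,}$ of both sides. The hypothesis that index futures and options are $\mQ$-martingales enters twice here: it gives $\ExQ{F_{t+\tau^*}}=F_t$, so the linear term contributes $\fr1E(F_t-E)$; and it identifies the conditional expectations of the truncated payoffs with the (forward) option prices, $\ExQ{(F_{t+\tau^*}-k)^+}=C_t(k,t+\tau^*;F_t)$ and $\ExQ{(k-F_{t+\tau^*})^+}=P_t(k,t+\tau^*;F_t)$. Rearranging yields exactly equation~\eqref{eqn:Intro:proofVIXasLog},
\[
\int_E^\infty\fr{C_t(k,t+\tau^*;F_t)}{k^2}\,dk+\int_0^E\fr{P_t(k,t+\tau^*;F_t)}{k^2}\,dk=-\ExQ{\ln F_{t+\tau^*}}+\ln F_t+\fr1E(F_t-E).
\]

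The final step is to set $E=F_t$. This kills the term $\fr1{F_t}(F_t-F_t)=0$ and, crucially, aligns the two integrals with the out-of-the-money convention of Definition~\ref{def:Intro:CS}: calls are taken for strikes $k>F_t$ and puts for $k<F_t$, so the left-hand side collapses to $\int_0^\infty \Theta_t(k,t+\tau^*;F_t)/k^2\,dk$. Multiplying through by $N/\tau^*$ with $N=2\cdot100^2$, recognising the resulting left-hand side as $\VIX_t^2$ via Definition~\ref{def:Intro:CS}, and combining the two log terms into $\ExQ{\ln(F_{t+\tau^*}/F_t)}$ (using $\filt t$-measurability of $F_t$), the identity $\VIX_t=\sqrt{-(N/\tau^*)\ExQ{\ln(F_{t+\tau^*}/F_t)}}$ follows after taking square roots.

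The step requiring the most care is the passage from the pathwise identity to its conditional expectation: one must justify interchanging $\ExQ{\cdot}$ with the $dk$-integrals (Fubini--Tonelli, using nonnegativity of the integrands once the payoffs are kept in the form $(X-k)^+$, $(k-X)^+$) and check that all quantities are finite — in particular that $\ExQ{|\ln F_{t+\tau^*}|}<\infty$ and that the out-of-the-money put wing decays fast enough as $k\searrow0$, which is precisely the integrability caveat noted in the remark preceding Definition~\ref{defn:Defn:VIXContKOptionsFixT}. Everything else is bookkeeping.
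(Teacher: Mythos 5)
Your proposal is correct and follows essentially the same route as the paper's proof: the Carr--Madan spanning identity of Lemma~\ref{lem:Defn:BR} applied pathwise with $f(s)=\ln s$, conditional expectation under $\mQ$ using the martingale property of futures and options, the choice $E=F_t$ to kill the linear term and align with the out-of-the-money convention, and multiplication by $N/\tau^*$. Your added remarks on Fubini--Tonelli and the integrability of $\ln F_{t+\tau^*}$ are care the paper leaves implicit, but they do not change the argument.
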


 The convention is to approximate the square of the VIX by the risk-neutral expectation of the realised variance of the underlying index. The VIX at time $t$ is often approximated by
    \begin{equation}
       \VIX_t\approx\left. -\fr{100}{\sqrt{\tau^*}}\times\sqrt{\Ex^\mQ\left(\left.[\ln F]_{t+\tau}-[\ln F]_t\right|\filt{t}\right)}\right|_{\tau=30\mbox{ days}},
    \end{equation}
where $[\cdot]$ is used to denote quadratic variation and $F_t$ denotes an index future expiring at time $t+\tau$.  There are many different representations of the expected realised variance of a stochastic processes and one of these representations, such as the logarithmic representation presented in Lemma~\ref{cor:3on2:VIXasLog}, is typically chosen for the definition of the VIX.  A consequence of defining the VIX in terms of expected realised variance is that model assumptions are implicit in the definition.  These model assumptions are fundamental to the origins of the VIX, however, {the market definition of the VIX at any fixed point in time is simply a function of observed market prices and is independent of any previously made assumptions}.
Lemma~\ref{cor:3on2:VIXasLog} may not hold, for example, if the index is a strict-local martingale as opposed to a true martingale.
\subsection{Index and VIX Futures Dynamics}

The modelling framework and the assumptions made regarding the class of traded instruments are stated in this section.  The setup is rather general; the assets are assumed to be continuous processes that admit a finite-dimensional realisation of arbitrary size.
Consider a continuous-time economy with trading interval $[0,T^*]$ for a fixed horizon date $T^*>0$.
Let $(W_t=(W_t^0,W_t^1,...,W_t^d))_{0\le t\le T^*}$ be a $(d+1)$-dimensional Brownian motion on $(\Omega,\mathcal{F},\mathbb{F},\mathbb{P})$, where $\mathbb{F}=(\filt{t})_{0\le t\le T^*}$ is the $\mathbb{P}$-augmented filtration generated by $W$.

\begin{assumption}
\textbf{(A1)}
For each $i=0,...,d$, let $\sigma^i:[0,T^*]\times\Omega\to\Rp^{d+1}$ and $\mu^i:[0,T^*]\times\Omega\to\R$ be $\mathbb{F}$-adapted processes, with \[\int_0^{T^*}|\mu^i_s|\,ds<\infty\quad\mbox{and}\quad\int_0^{T^*}|\sigma^i_s|^2\,ds<\infty,\quad\mP\mbox{-a.s..}\]
The vector of processes $(\vX_t=(X_t^0,X_t^1,...,X_t^d))_{0\le t\le T^*}\in \R^{d+1}$ is assumed to satisfy
\begin{align}
  X^i_t &=X^i_0+\int_0^t X^i_s\,\mu^i_s \,ds +\int_0^tX^i_s\,\sigma^i_s\cdot dW^\mP_s,\label{eqn:VFM:Xvect}
%  d\hat{C}_t &= \mu^\hat{C}_t(k)\hat{C}_t(k) dt +\sigma^\hat{C}_t(k)\hat{C}_t(k) dW^\mP_t\\
\end{align}
for each $i=0,...,d$.  Let $F\equiv X^0$, so that the underlying index future $(F_t=X_t^0)_{0\le t\le T^*}$ is assumed to be a stochastic process with dynamics given by
\begin{align}
  F_t &= F_0 + \int_0^t F_s\,\mu^0_s \,ds +\int_0^tF_s\,\sigma^0_s \cdot dW^\mP_s. \label{eqn:VFM:F_SDE}
%  d\hat{C}_t &= \mu^\hat{C}_t(k)\hat{C}_t(k) dt +\sigma^\hat{C}_t(k)\hat{C}_t(k) dW^\mP_t\\
\end{align}
\end{assumption}

The motivation for such a setup is to allow for existing models, such as stochastic volatility models, to be examined within the proposed framework.

\begin{assumption}
\textbf{(A2)}
For all times $t\in[0,T^*]$, the market contains call and put options for all strikes $k\in[0,\infty)$ and time-till expiries $\tau\in(0,T^*-t]$.
The price of an option at time $t\in[0,T^*]$ with strike $k\in[0,\infty)$ and expiry $\tau\in(0,T^*-t]$ is derived from $\vX_t$ and is assumed to be a deterministic function, denoted by $C(k,\tau,t,\vX_t)$ for a call and $P(k,\tau,t,\vX_t)$ for a put.
\end{assumption}
%
%The VIX is defined as a function of these option prices, that is,
%%
%\begin{equation*}
%    \Theta_t(k,\tau;f) \equiv \Theta(k,\tau,t,\vX_t)
%\end{equation*}
%and
%\begin{equation*}
%    \VIX_t :=\sqrt{\int_0^\infty\fr{\Theta(k,\tau^*,t,\vX_t)}{k^2}\,dk},
%\end{equation*}
%%
%%Let $\mathcal{T}$ represent the triangular region $\{(t,T):0\le t\le T\le T^*\}$.
%for all $t\in[0,T^*-\tau^*]$.  The reason for such a choice of $N$ is for mathematical simplicity and the assumption does not result in any loss of generality.
%

For all $0\le t\le T\le T^*$, let $\vF(t,T)$ denote the value of a VIX future at time $t$ expiring at time $T$.  The following assumption is concerned with the dynamics of the VIX futures.  The purpose of the following assumption is to allow for a very general specification for the VIX futures and no interpretation or context for the measure $\mP$ is initially provided.
\begin{assumption}
\textbf{(A3)}
For all $0\le T\le T^*$, let $\nu(\cdot,T):[0,T]\times
\Omega\to \Rp^{d+1}$ and $\mu^\VIX(\cdot,T):[0,T]\times \Omega\to \R$ be $\mathbb{F}$-adapted processes.
For all $0\le t<T\le T^*$, the family of equations for the VIX futures is assumed to satisfy
\begin{align}
    \vF(t,T) ={}& \vF(0,T)+\int_0^t\vF(u,T)\,\mu^\VIX(u,T)\,du
    +\int_0^t\vF(u,T)\,\nu(u,T)\cdot dW^{\mP}_u.\label{eqn:VFM:F}
\end{align}

\begin{remark}
For results on the the existence and uniqueness of solutions to equations of the form of Equation~\eqref{eqn:VFM:F}, the reader is referred to Section 4.6 of \cite{Mort88}.
\end{remark}

Further assume that, for any $0\le T\le T^*$,
    \begin{equation*}
        \int_0^T|\mu^\VIX(u,T)|\,du+\int_0^T|\nu(u,T)|^2\,du<\infty,\quad \mP\mbox{-a.s.},
    \end{equation*}
and that the limit
\begin{align*}
    \vF(t,t)&=\lim_{T\searrow t}\vF(t,T)\\
        &= \vF(0,t)+\int_0^t\vF(u,t)\,\mu^\VIX(u,t)\,du
    +\int_0^t\vF(u,t)\,\nu(u,t)\cdot dW^{\mP}_u,
\end{align*}
is well defined for all $0\le t<T^*$, $\mP\mbox{-a.s.}$.
\end{assumption}

Given the above family of equations, one can introduce the following process that represents the VIX as implied by the VIX futures.
\begin{defn}
    The \textit{implied VIX} is given by
\begin{equation*}
    \iVIX_t:=\vF(t,t),
\end{equation*}
for all $t\in[0,T^*]$.
\end{defn}
Section~\ref{sect:VFM:DerivedDynamics} is concerned with the derivation of the dynamics of the VIX from the underlying index and the setup specified in (A1) and (A2) is assumed.  In Section~\ref{sect:VFM:ImpliedDynamics}, the setup of (A3) is all that is assumed. The implied dynamics of the VIX is the focus of this section and the analysis is independent of any specification for the underlying index.  The focus of Section~\ref{sect:VFM:Consistency} and Section~\ref{sect:VFM:Applications} is on the complete framework of (A1)~-~(A3).  %Two necessary conditions are stated so that the dynamics implied by all processes are consistent and there is no arbitrage in the joint market of VIX and equity derivatives.

\section{Deriving the Dynamics of the VIX from the Index}
\label{sect:VFM:DerivedDynamics}
In this section,
%an underlying index future is modelled by a $(d+1)$-dimensional semi-martingale.  Given general stochastic dynamics for the index future, a semi-martingale representation of the VIX is derived.  In Lemma~\ref{lem:VFM:IndexToVIX} the dynamics implied by the specification in Section~\ref{sect:VFM:Setup} are stated.
a general semi-martingale representation of the VIX is derived under the measure~$\mP$, which is used to denote the real-world or statistical measure. A typical starting point for most market models is to directly specify the dynamics of the modelled quantities under an equivalent martingale measure. This is done to avoid the complexities involved with the specification of a market price of risk and with a change in measure. The reason for performing an analysis of the real-world dynamics is that the VIX is often used in empirical investigations, due to its role as an indicator for market sentiment.  Starting under the real-world measure also avoids complexities regarding the existence of a risk-neutral measure. The ultimate goal, however, is to provide a framework for the pricing and hedging of derivatives, which is typically done under an equivalent risk-neutral measure.  The existence of such a measure is not discussed in this section, rather, it is the subject of Theorem~\ref{thm:VFM:EMM1} and Theorem~\ref{thm:VFM:EMM2} in the forthcoming Section~\ref{sect:VFM:Consistency}.

This section is structured as follows.  The Ito-Ventzel formula is first applied to derive the dynamics of the square of the VIX when defined in terms of options with fixed expiry.  A Musiela-like parameterisation in terms of fixed time-till maturity is then introduced, which allows for the derivation of a governing stochastic differential equation for the VIX.  The representation is presented in Proposition~\ref{prop:VIXSDE:VIXSDE}.   In Corollary~\ref{cor:VFM:IndexToVIX}, the dynamics implied by the specification in Section~\ref{sect:VFM:Setup} are stated.
%
%
%\begin{prop}
%%
%\label{prop:Defn:limofxto0}
%The limit of the VIX as time-till maturity tends to zero is given by
%%
%\begin{equation}
%    \lim_{\tau\searrow0}\VIX_t^2(\tau) = \left[\sigma_t^2+2\int_\Rnozero(e^{x}-x-1)\nu_0(x)\,dx\right].
%\end{equation}
%%
%\begin{proof}
%%
%    Recall from Proposition~\ref{prop:RERV:ERVContKOptions} that
%%
%
%\begin{equation*}
%    \fr{2}{\tau}\int_0^\infty\fr{\Theta(K,\tau)}{k^2}dk=-\fr{2}{\tau }\Ex^\mQ\left[\left.\ln\left(\fr{F_{t+\tau}}{F_t}\right)\right|\filt{t}\right].
%\end{equation*}
%%
%Therefore,
%%
%    \begin{align*}
%    \lim_{\tau \searrow0}\fr{2}{\tau }\int_0^\infty\fr{\Theta(k,\tau )}{k^2}dk
%        ={}&-\lim_{\tau \searrow0}\fr{2}{\tau }\Ex^\mQ\left[\left.\ln\left(\fr{F_{t+\tau }}{F_t}\right)\right|\filt{t}\right]\\
%        ={}&-\lim_{\tau \searrow0}\fr{2}{\tau }\Ex^\mQ\left[\int_t^{t+\tau }\fr{1}{F_{s-}}dF_s\right.\\
%        {}&\left.\left.-\int_t^{t+\tau }\fr{d\qv{F}{F}_{s}}{2F_s^2}+\int_t^{t+\tau }\int_{\Rnozero}(x+1-e^x)\mu(dx,ds)\right|\filt{t}\right]\\
%        ={}&\lim_{\tau \searrow0}
%        \fr{1}{\tau }\int_t^{t+\tau }\Ex^\mQ\left[\left.\fr{d\qv{F}{F}_{s}}{F_s^2}\right|\filt{t}\right]ds\\
%        {}&+2\lim_{\tau \searrow0}\fr{1}{\tau }\int_t^{t+\tau }\Ex^\mQ\left[\left.\int_\Rnozero(e^x-x-1)\mu(dx,ds)\right|\filt{t}\right]\\
%        ={}&\fr{\qv{F}{F}_{t}}{F_t^2}+2\int_\Rnozero(e^x-x-1)\nu_t(dx).
%    \end{align*}
%%
%\end{proof}
%\end{prop}
%%
\begin{prop}
\label{prop:VFM:Vhat2}
Suppose that the setup of (A1) and (A2) in Section~\ref{sect:VFM:Setup} are assumed, that is, at time $t$ puts and calls on the index with strike $k$ and expiry at time $T$ are given by the functions $\hat P(k,T,t,\vX_t)$ and $\hat C(k,T,t,\vX_t)$.  Further assume that, for all $T\in (0,T^*]$,
\[\hat P(\cdot,T,\cdot,\cdot),\, \hat C(\cdot,T,\cdot,\cdot)\in\mathcal{C} \quad\mbox{and}\quad\hat P(\cdot,T,\cdot,\cdot),\, \hat C(\cdot,T,\cdot,\cdot)\in \mathcal{C}^{1,1,2}\]
on $\Rp\times [0,T)\times\Rp^{d+1}$, for all $(k,t,\vx)\in\Rp\times [0,T^*)\times\Rp^{d+1}$,
\[\hat P(k,\cdot,t,\vx),\,\hat C(k,\cdot,t,\vx)\in \mathcal{C}^1\] on $(t,T^*]$,
 for all $\vx\in\R^{d+1}$ and $0\le t<T\le T^*$,
\begin{align*}
    \int_0^\infty\fr{1}{k^2}\hat\Theta(k,T,t,\vx)\,m(dk)<\infty,\quad
    \int_0^\infty\fr{1}{k^2}\left|\fr{ \pr\hat\Theta}{\pr t}(k,T,t,\vx)\right|\,m(dk)<\infty,
\end{align*}
and for all $\vx\in\R^{d+1}$, $0\le t<T\le T^*$ and $i,j=0,...,d,$
\begin{align*}
    \int_0^\infty\fr{1}{k^2}\left|\fr{\pr\hat\Theta}{\pr x_i}(k,T,t,\vx)\right|\,m(dk)<\infty,\quad\mbox{and}\quad
    \int_0^\infty\fr{1}{k^2}\left|\fr{\pr^2\hat\Theta}{\pr x_i\pr x_j}(k,T,t,\vx)\right|\,m(dk)<\infty.
\end{align*}
Then, for all $T\in (0,T^*]$ and $t\in[0,T)$, the dynamics of the process $\hat\VIX^2_t(T)$ are given by
\begin{align}
    d\hat\VIX_t^2(T)={}&-\fr{1}{T-t}\hat\VIX_t^2(T)\,dt+\fr{1}{T-t}\int_0^\infty\fr{1}{k^2}\fr{\pr\hat \Theta}{\pr t}(k,T,t,\vX_t)\,m(dk)\,dt\nonumber\\
            {}&+\fr{1}{T-t}\sum_{i=0}^d\int_0^\infty\fr{1}{k^2}\fr{\pr\hat \Theta}{\pr x_i}(k,T,t,\vX_t)\,m(dk)\,dX^i_t-\fr{1}{2(T-t)}\fr{1}{F_t^2}d\qv{F}{F}_t\nonumber\\
                {}&+\fr{1}{2(T-t)}\sum_{i,j=0}^d\int_0^\infty\fr{1}{k^2}\fr{\pr^2\hat \Theta}{\pr x_i\pr x_j}(k,T,t,\vX_t)\,m(dk)\,d\qv{X^j}{X^i}_t.\label{eqn:DR:wtVIXsquared}
\end{align}
%
%where for all $T>t$, $\hat\VIX_t(T)$ is given by
%%
%\begin{equation*}
%                      \hat\VIX_t(T) :=\sqrt{\fr{1}{T-t}\int_0^\infty\fr{\hat\Theta(k,T;F_t)}{k^2}\,dk},\quad\forall 0\,\le t<T,
%\end{equation*}
%%
%see Definition \ref{defn:Defn:VIXContKOptionsFixT}.

\begin{proof}
    See Appendix~\ref{app:proof:propVhat2}.
\end{proof}
\end{prop}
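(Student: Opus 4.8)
The plan is to write $\hat\VIX^2_t(T)$ as $(T-t)^{-1}$ times a deterministic $\mathcal{C}^{1,2}$ function of $(t,\vX_t)$ and differentiate with the Ito formula. Fix $T\in(0,T^*]$ and set
\[
    G(t,\vx):=\int_0^\infty\fr{\hat\Theta(k,T,t,\vx)}{k^2}\,m(dk)=\int_0^{x_0}\fr{\hat P(k,T,t,\vx)}{k^2}\,m(dk)+\int_{x_0}^\infty\fr{\hat C(k,T,t,\vx)}{k^2}\,m(dk),
\]
using that the out-of-the-money option $\hat\Theta$ is the put for strikes below the forward level $x_0$ and the call for strikes above it. Since $\hat F_t(T)=X_t^0=F_t$, squaring the expression in Definition~\ref{defn:Defn:VIXContKOptionsFixT} gives $\hat\VIX^2_t(T)=(T-t)^{-1}G(t,\vX_t)$, so the conclusion will follow by applying the Ito formula to $G(t,\vX_t)$ --- equivalently, the Ito--Ventzel formula to the deterministic field $(t,\vx)\mapsto\hat\Theta(k,T,t,\vx)$ with the $k$-integral moved inside by stochastic Fubini --- together with the product rule for the deterministic factor $(T-t)^{-1}$.

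The core of the argument is the computation of the partial derivatives of $G$ by differentiating under the integral sign; the integrability hypotheses on $\hat\Theta$, $\pr_t\hat\Theta$, $\pr_{x_i}\hat\Theta$ and $\pr^2_{x_ix_j}\hat\Theta$ are exactly what legitimises this and controls the contributions as $k\searrow 0$ and $k\to\infty$. Since the endpoints of integration do not involve the explicit time variable, $\pr_tG=\int_0^\infty k^{-2}\,\pr_t\hat\Theta\,m(dk)$. For the first spatial derivatives, a Leibniz boundary term at the state-dependent endpoint $k=x_0$ arises only for $i=0$ and is a multiple of $\hat P(x_0,\cdot)-\hat C(x_0,\cdot)$, which vanishes by the put--call parity identity $\hat C(k,T,t,\vx)-\hat P(k,T,t,\vx)=x_0-k$; hence $\pr_{x_i}G=\int_0^\infty k^{-2}\,\pr_{x_i}\hat\Theta\,m(dk)$ for each $i$. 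Differentiating a second time in $x_0$, the boundary terms no longer cancel: parity gives $\pr_k(\hat P-\hat C)=1$ and $\pr_{x_0}(\hat P-\hat C)=-1$ at $k=x_0$, and collecting the contributions yields $\pr^2_{x_0}G=\int_0^\infty k^{-2}\,\pr^2_{x_0}\hat\Theta\,m(dk)-x_0^{-2}$, whereas for $(i,j)\ne(0,0)$ no boundary term survives (there $\pr_{x_j}(\hat P-\hat C)=0$) and $\pr^2_{x_ix_j}G=\int_0^\infty k^{-2}\,\pr^2_{x_ix_j}\hat\Theta\,m(dk)$. Evaluated at $x_0=F_t$, the isolated term $-F_t^{-2}$ is precisely what produces the summand $-\tfrac{1}{2(T-t)}F_t^{-2}\,d\qv{F}{F}_t$ of \eqref{eqn:DR:wtVIXsquared}.

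It remains to assemble the pieces: apply the Ito formula to $G(t,\vX_t)$ using the dynamics \eqref{eqn:VFM:Xvect} and the derivatives just obtained, multiply through by $(T-t)^{-1}$, add the contribution of $\tfrac{d}{dt}(T-t)^{-1}$ acting on $G(t,\vX_t)$ --- the $dt$ term proportional to $\hat\VIX^2_t(T)/(T-t)$ --- and identify $\qv{X^0}{X^0}_t=\qv{F}{F}_t$; this reproduces \eqref{eqn:DR:wtVIXsquared}. I expect the substantive work --- which I would relegate to the appendix --- to be establishing the $\mathcal{C}^{1,2}$-regularity of $G$ rigorously: justifying the differentiation under the integral sign in the presence of the moving, state-dependent endpoint $k=F_t$, and controlling the option-price derivatives as $k\searrow 0$ and $k\to\infty$ via the integrability assumptions. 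A secondary but essential point is that put--call parity must hold at the level of the pricing functions $\hat P$ and $\hat C$; otherwise the boundary terms contribute an extra, non-vanishing piece that must be carried through the computation.
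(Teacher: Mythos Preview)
Your proposal is correct. The paper's proof differs in organization: rather than folding the integration endpoint into the state vector and applying the ordinary It\^o formula to a deterministic $G(t,\vx)$, it keeps the split point $f$ as a separate variable, writes $G_t(f)=G(f,T,t,\vX_t)$, derives the semimartingale dynamics of $G_t(f)$ for each fixed $f$ via It\^o and stochastic Fubini, and then invokes the It\^o--Ventzel formula (Lemma~\ref{lem:VIXSDE:ItoV}) to evaluate at the moving endpoint $f=F_t$. Your route is more elementary --- only the standard It\^o formula and the Leibniz rule are needed --- and this works precisely because under (A2) the option prices are deterministic functions of $(t,\vX_t)$ and $F_t=X_t^0$ is already a coordinate of $\vX_t$. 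The paper's route, by contrast, separates the roles of ``integration endpoint'' and ``state variable'' and would continue to apply if option prices were general semimartingale fields not reducible to functions of a finite-dimensional Markov state. Both arguments rest on exactly the same put--call parity cancellations for the boundary terms at $k=F_t$ and the same product rule for the factor $(T-t)^{-1}$, so the substance is identical.
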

\begin{remark}
    The quadratic variation term and the lack of symmetry in Equation~\eqref{eqn:DR:wtVIXsquared} are due to an application of put-call parity.
\end{remark}

Since the VIX is defined for a fixed time horizon, it is useful to introduce a Musiela-like parameterisation, which is the purpose of the following lemma.
One may argue that such a representation is unnecessary for $\tau\in(0,T^*-t]$, due to the fact that the VIX is always defined for a $30$-day time horizon. The representation is not entirely worthless, however, as it provides a starting point for the analysis of the dynamics of the VIX when the definition is based on a rolling portfolio (see Equation~\eqref{eqn:Defn:PortfolioVIX}).
%Such a definition is not the focus of this thesis, so the time-till expiry is simply chosen to be a fixed positive constant.
%

%%
\begin{lem}
    For all $\tau>0$, let
    \begin{equation}
        {Y}_t(\tau):=\hat\VIX^2_t(t+\tau),\,\,\forall t>0.\label{eqn:DR:Ydef}
    \end{equation}
    Then
    \begin{align*}
        dY_t(\tau)={}&\fr{1}{\tau}\left(\int_{0}^\infty \fr{1}{k^2}\left(\fr{\pr\Theta}{\pr t}(k,\tau,t,\vX_t)+\fr{\pr\Theta}{\pr \tau}(k,\tau,t,\vX_t)\right)\,m(dk)-\,Y_t(\tau)\right)\,dt\\
                {}&-\fr{1}{2\tau\,F_t^2}d\qv{F}{F}_t+\fr{1}{\tau}\sum_{i=0}^d\int_0^\infty\fr{1}{k^2}\fr{\pr \Theta}{\pr x_i}(k,\tau,t,\vX_t)\,m(dk)\,dX^i_t\nonumber\\
                {}&+\fr{1}{2\tau}\sum_{i,j=0}^d\int_0^\infty\fr{1}{k^2}\fr{\pr^2 \Theta}{\pr x_i\pr x_j}(k,\tau,t,\vX_t)\,m(dk)\,d\qv{X^j}{X^i}_t,
    \end{align*}
    $\mP$-a.s..
\begin{proof}
    Equation~\eqref{eqn:DR:wtVIXsquared} and Equation~\eqref{eqn:DR:Ydef} imply the dynamics
    \begin{align*}
    {Y}_t(\tau)={}&Y_0(\tau)+\int_0^t\,dY_s(\tau)\\
                ={}&Y_0(\tau)+\int_0^t  d\hat\VIX^2_s(s+\tau)+\int_0^t\fr{\pr\hat\VIX^2_s}{\pr T}(s+\tau)\,ds\\
                ={}&Y_0(\tau)-\int_0^t\fr{1}{\tau}\hat\VIX^2_s(\tau)\,ds
                +\int_0^t\fr{1}{\tau}\int_{0}^\infty\fr{1}{k^2}\fr{\pr\hat\Theta}{\pr t}(k,s+\tau,s,\vX_s)\,m(dk)\,ds
                \\{}&+\sum_{i=0}^d\int_0^t\fr{1}{\tau}\int_0^\infty\fr{1}{k^2}\fr{\pr\hat \Theta}{\pr x_i}(k,s+\tau,s,\vX_s)\,m(dk)\,dX^i_s\\
                {}&+\fr{1}{2}\sum_{i,j=0}^d\int_0^t\fr{1}{\tau}\int_0^\infty\fr{1}{k^2}\fr{\pr^2\hat \Theta}{\pr x_i\pr x_j}(k,s+\tau,s,\vX_s)\,m(dk)\,d\qv{X^j}{X^i}_s\\
                {}&-\fr{1}{2}\int_0^t\fr{1}{\tau F_s^2}d\qv{F}{F}_s+\int_0^t\fr{\pr }{\pr T}\fr{1}{\tau}\int_0^\infty\fr{1}{k^2}{\hat\Theta}(k,s+\tau,s,\vX_s)\,m(dk)\,ds.
    \end{align*}
Writing the option prices as a function of time-till expiry and differentiating under the integral, which is a consequence of the Dominated Convergence Theorem, yields
    \begin{align*}
               {Y}_t(\tau) ={}&Y_0(\tau)-\int_0^t \fr{1}{\tau}\,Y_s(\tau)\,ds
                +\int_0^t\fr{1}{\tau}\int_{0}^\infty\fr{1}{k^2}\left(\fr{\pr\Theta}{\pr t}(k,\tau,s,\vX_s)+\fr{\pr\Theta}{\pr \tau}(k,\tau,s,\vX_s)\right)\,m(dk)\,ds\nonumber
                \\{}&-\fr{1}{2\tau}\int_0^t\fr{1}{F_s^2}d\qv{F}{F}_s+\sum_{i=0}^d\int_0^t\fr{1}{\tau}\int_0^\infty\fr{1}{k^2}\fr{\pr \Theta}{\pr x_i}(k,\tau,s,\vX_s)\,m(dk)\,dX^i_s\\
                {}&+\fr{1}{2}\sum_{i,j=0}^d\int_0^t\fr{1}{\tau}\int_0^\infty\fr{1}{k^2}\fr{\pr^2 \Theta}{\pr x_i\pr x_j}(k,\tau,s,\vX_s)\,m(dk)\,d\qv{X^j}{X^i}_s.
    \end{align*}

Expressing the above equation in the corresponding stochastic differential equation form completes the proof.
\end{proof}
\end{lem}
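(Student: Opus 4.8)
The plan is to treat $Y_t(\tau)$ as the restriction of the two-parameter random field $(t,T)\mapsto\hat\VIX^2_t(T)$ to the moving-maturity diagonal $T=t+\tau$, and to read off its dynamics from Proposition~\ref{prop:VFM:Vhat2} by adjoining the ``maturity-transport'' drift that is characteristic of Musiela-type reparameterisations. Concretely, I would establish the decomposition
\[
Y_t(\tau)=\hat\VIX^2_0(\tau)+\int_0^t d_s\hat\VIX^2_s(T)\big|_{T=s+\tau}+\int_0^t\frac{\pr}{\pr T}\hat\VIX^2_s(T)\big|_{T=s+\tau}\,ds,\qquad 0\le t<T^*,
\]
in which the first integrand is the Ito differential furnished by \eqref{eqn:DR:wtVIXsquared} with the maturity held fixed and then evaluated at $T=s+\tau$, and the second integrand is the ordinary partial derivative of $\hat\VIX^2_s(\cdot)$ in the maturity argument. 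Substituting $T-t=\tau$ into \eqref{eqn:DR:wtVIXsquared} reproduces, with $1/\tau$ in place of $1/(T-t)$, the $\pr\hat\Theta/\pr t$, $\pr\hat\Theta/\pr x_i$, $\pr^2\hat\Theta/\pr x_i\pr x_j$ and $F_s^{-2}\,d\qv{F}{F}_s$ contributions, together with the term proportional to $Y_s(\tau)$; the latter two (the $Y_s(\tau)$ drift and the $\qv{F}{F}$ term) are exactly the put-call-parity remainders already present in Proposition~\ref{prop:VFM:Vhat2}.

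Next I would handle the transport term. Differentiating $\hat\VIX^2_s(T)=\frac{1}{T-s}\int_0^\infty k^{-2}\hat\Theta(k,T,s,\vX_s)\,m(dk)$ in $T$ and evaluating at $T=s+\tau$ produces a further multiple of $Y_s(\tau)$ (from differentiating the $1/(T-s)$ prefactor) plus $\frac{1}{\tau}\int_0^\infty k^{-2}\,\frac{\pr\hat\Theta}{\pr T}(k,s+\tau,s,\vX_s)\,m(dk)$; the interchange of $\pr/\pr T$ with the $m(dk)$-integral is legitimate under the $\mathcal{C}^1$-in-maturity hypothesis and the finiteness conditions imposed in Proposition~\ref{prop:VFM:Vhat2}, by dominated convergence. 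I would then pass to the time-till-expiry parameterisation $\Theta(k,\tau,s,\vx)=\hat\Theta(k,s+\tau,s,\vx)$ and use the chain rule in the maturity/calendar variables to rewrite $\pr\hat\Theta/\pr t$ and $\pr\hat\Theta/\pr T$ (evaluated along $T=s+\tau$) in terms of $\pr\Theta/\pr t$ and $\pr\Theta/\pr\tau$; collecting the resulting finite-variation terms into $\frac{1}{\tau}\big(\int_0^\infty k^{-2}(\pr\Theta/\pr t+\pr\Theta/\pr\tau)\,m(dk)-Y_s(\tau)\big)$ and keeping the $dX^i_s$ and $d\qv{X^j}{X^i}_s$ parts as they stand, the integral identity for $Y_t(\tau)$ takes the asserted form. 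Passing from the integral form to the corresponding stochastic differential equation then finishes the argument, $\mP$-a.s.

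The step I expect to be the main obstacle is the rigorous justification of the decomposition in the first display, i.e.\ showing that the increments of the frozen-maturity family $s\mapsto\hat\VIX^2_s(T)$, taken along a refining partition with the maturity re-centred at each step, converge to $\int_0^t d_s\hat\VIX^2_s(s+\tau)$, while the remaining $T$-increments converge to $\int_0^t\pr_T\hat\VIX^2_s(s+\tau)\,ds$ by the mean value theorem. This is the familiar HJM/Musiela device, but it requires enough joint regularity in $(s,T)$ of the random field $\hat\VIX^2_s(T)$ and of the coefficients appearing in \eqref{eqn:DR:wtVIXsquared} — which is precisely what the smoothness and $k$-integrability hypotheses carried over from Proposition~\ref{prop:VFM:Vhat2} supply; once that is in hand, the rest is bookkeeping with the chain rule and the substitution $T=t+\tau$, and no cross-term arises because the transport correction is of finite variation.
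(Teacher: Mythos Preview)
Your proposal is correct and follows essentially the same route as the paper: both apply the Musiela-type decomposition $Y_t(\tau)=Y_0(\tau)+\int_0^t d_s\hat\VIX^2_s(T)\big|_{T=s+\tau}+\int_0^t\partial_T\hat\VIX^2_s(s+\tau)\,ds$, substitute the dynamics \eqref{eqn:DR:wtVIXsquared} from Proposition~\ref{prop:VFM:Vhat2} with $T-s=\tau$, compute the maturity-transport term, and then pass to the time-till-expiry parameterisation via the chain rule and dominated convergence. The only difference is one of emphasis: you spell out the partition-based justification of the Musiela step and note explicitly the contribution from differentiating the $1/(T-s)$ prefactor, whereas the paper takes the decomposition as given and writes the transport term directly.
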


\begin{defn}
    Let $Y^*_t = Y_t(\tau^*)$.  Since $\tau^*$ is a fixed, the dependence of $Y^*_t$ on $\tau^*$ is suppressed.
\end{defn}

The object of concern is the VIX, not the square of the VIX, and the following proposition provides the dynamics of the VIX.
\begin{prop}
    \label{prop:VIXSDE:VIXSDE}
    Let $v(y):=\sqrt{y},\,\,\forall y>0$.  Therefore $V_t=v(Y^*_t),\,\,\forall t>0$, and the dynamics of $V_t$ are given by
    \begin{align}
        d\VIX_t &= u^{(1)}_t\VIX_t dt +u^{(2)}_t\VIX_t\,d\qv{F}{F}_t +\sum_{i,j=0}^du^{(i,j)}_t\VIX_t\,d\qv{X^j}{X^i}_t +\sum_{i=0}^dw^{(i)}_t\VIX_t\,dX^i_t,
%  d\hat{C}_t &= \mu^\hat{C}_t(k)\hat{C}_t(k) dt +\sigma^\hat{C}_t(k)\hat{C}_t(k) dW^\mP_t\\
    \end{align}
    where the drift and diffusion coefficients of $\VIX$ are given by the equations
    \begin{align}
        u^{(1)}_t ={}& -\fr{1}{2\tau^*}+\fr{1}{2\tau^*\VIX^2_t }\int_{0}^\infty \fr{1}{k^2}\left(\fr{\pr\Theta}{\pr t}(k,\tau^*,t,\vX_t)+\fr{\pr\Theta}{\pr \tau}(k,\tau^*,t,\vX_t)\right)\,m(dk),\label{eqn:VIM:u}\\
        u^{(2)}_t={}&-\fr{1}{4\tau^*\VIX^2_tF_t^2}\\
        u^{(i,j)}_t ={}&\fr{1}{4\tau^*\VIX_t^2}\int_{0}^\infty \fr{1}{k^2}\fr{\pr^2\Theta}{\pr x_i\pr x_j}(k,\tau^*,t,\vX_t)\,m(dk)-\fr{w^{(i)}_tw^{(j)}_t}{2}, \label{eqn:VIM:u2}\\{}\nonumber\\
         w^{(i)}_t ={}& \fr{1}{2\tau^*\VIX^2_t }\int_{0}^\infty \fr{1}{k^2}\fr{\pr\Theta}{\pr x_i}(k,\tau^*,t,\vX_t)\,m(dk),\label{eqn:VIM:w}
    \end{align}
    $\mP$-a.s., for all $t>0$.
\begin{proof}
    By Ito's formula,
    \begin{equation*}
        d\VIX_t = \fr{\pr v}{\pr y}(Y^*_t)\,dY^*_t+\fr{1}{2}\fr{\pr^2 v}{\pr y^2}(Y^*_t)\,d\qv{Y^*}{Y^*}_t,
    \end{equation*}
    where
    \begin{equation*}
         \fr{\pr v }{\pr y} (Y^*_t)= \fr{1}{2\VIX_t}\hspace{0.2cm}\mbox{ and } \hspace{0.2cm} \fr{\pr^2v}{\pr y^2} (Y^*_t)= -\fr{1}{4\VIX^3_t}.
    \end{equation*}
    Therefore
    \begin{align*}
        d\VIX_t ={}&\fr{1}{2\tau^*}\left(\fr{1}{\VIX_t }\int_{0}^\infty \fr{1}{k^2}\left(\fr{\pr\Theta}{\pr t}(k,\tau^*,t,\vX_t)+\fr{\pr\Theta}{\pr \tau}(k,\tau^*,t,\vX_t)\right)\,m(dk)-\VIX_t\right)\,dt\\
        {}&-\fr{1}{4\tau^*\VIX_tF_t^2}d\qv{F}{F}_t+\fr{1}{4\tau^*\VIX_t}\sum_{i,j=0}^d\left(\int_{0}^\infty \fr{1}{k^2}\fr{\pr^2\Theta}{\pr x_i\pr x_j}(k,\tau^*,t,\vX_t)\,m(dk)\right)\,d\qv{X^j}{X^i}_t\\
                {}&-\fr{1}{8(\tau^*)^2\VIX^3_t }\sum_{i,j=0}^d\left(\int_{0}^\infty \fr{1}{k^2}\fr{\pr\Theta}{\pr x_i}(k,\tau^*,t,\vX_t)\,m(dk)\right)\left(\int_{0}^\infty \fr{1}{k^2}\fr{\pr\Theta}{\pr x_j}(k,\tau^*,t,\vX_t)\,m(dk)\right)\,d\qv{X^j}{X^i}_t\\
                {}&+\fr{1}{2\tau^*\VIX_t }\sum_{i=0}^d\int_{0}^\infty \fr{1}{k^2}\fr{\pr\Theta}{\pr x_i}(k,\tau^*,t,\vX_t)\,m(dk)\,dX^i_t,\quad\mP\mbox{-a.s..}
    \end{align*}
    Introducing $u^{(1)}_t$, $u^{(2)}_t$, $u_t^{(i,j)}$ and $w^{(i)}_t$, as defined in Equations \eqref{eqn:VIM:u}-\eqref{eqn:VIM:w}, completes the proof.
\end{proof}
\end{prop}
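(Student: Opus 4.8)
The plan is to read off the dynamics of $\VIX_t=v(Y^*_t)$, with $v(y)=\sqrt y$, from a single application of Ito's formula, feeding in the semi-martingale decomposition of $Y^*_t=Y_t(\tau^*)$ established in the preceding lemma. Since $v\in\mathcal{C}^2((0,\infty))$ with $v'(y)=\fr{1}{2\sqrt y}$ and $v''(y)=-\fr{1}{4}y^{-3/2}$, and since $Y^*_t=\VIX_t^2>0$ (strict positivity of $\VIX$ being implicit in the statement, as $\VIX_t$ occurs in denominators, and inherited from the standing positivity hypotheses on the out-of-the-money option prices), Ito's formula gives
\[
    d\VIX_t = v'(Y^*_t)\,dY^*_t+\fr{1}{2} v''(Y^*_t)\,d\qv{Y^*}{Y^*}_t = \fr{1}{2\VIX_t}\,dY^*_t-\fr{1}{8\VIX_t^3}\,d\qv{Y^*}{Y^*}_t .
\]

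First I would substitute the explicit expression for $dY_t(\tau)$ at $\tau=\tau^*$ from the lemma into $\fr{1}{2\VIX_t}\,dY^*_t$. Using $Y^*_t=\VIX_t^2$ and pulling a common factor $\VIX_t$ out of each resulting term: the $dt$ part gives $u^{(1)}_t\VIX_t$ with $u^{(1)}_t=-\fr{1}{2\tau^*}+\fr{1}{2\tau^*\VIX_t^2}\int_0^\infty\fr{1}{k^2}\left(\fr{\pr\Theta}{\pr t}+\fr{\pr\Theta}{\pr\tau}\right)m(dk)$; the $d\qv{F}{F}_t$ part gives $u^{(2)}_t\VIX_t$ with $u^{(2)}_t=-\fr{1}{4\tau^*\VIX_t^2F_t^2}$; the $dX^i_t$ part gives $w^{(i)}_t\VIX_t$ with $w^{(i)}_t=\fr{1}{2\tau^*\VIX_t^2}\int_0^\infty\fr{1}{k^2}\fr{\pr\Theta}{\pr x_i}m(dk)$; and the $d\qv{X^j}{X^i}_t$ part gives the first summand of $u^{(i,j)}_t\VIX_t$, namely $\fr{1}{4\tau^*\VIX_t^2}\int_0^\infty\fr{1}{k^2}\fr{\pr^2\Theta}{\pr x_i\pr x_j}m(dk)$. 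This is routine once the factor $\VIX_t$ is tracked.

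The remaining ingredient is $d\qv{Y^*}{Y^*}_t$. Since finite-variation terms do not contribute to quadratic variation, this is determined solely by the martingale part of $Y^*$, which by the lemma equals $\sum_{i=0}^d\fr{1}{\tau^*}\left(\int_0^\infty\fr{1}{k^2}\fr{\pr\Theta}{\pr x_i}(k,\tau^*,t,\vX_t)\,m(dk)\right)dX^i_t$, whence
\[
    d\qv{Y^*}{Y^*}_t = \fr{1}{(\tau^*)^2}\sum_{i,j=0}^d\left(\int_0^\infty\fr{1}{k^2}\fr{\pr\Theta}{\pr x_i}(k,\tau^*,t,\vX_t)\,m(dk)\right)\left(\int_0^\infty\fr{1}{k^2}\fr{\pr\Theta}{\pr x_j}(k,\tau^*,t,\vX_t)\,m(dk)\right)d\qv{X^j}{X^i}_t .
\]
Multiplying by $-\fr{1}{8\VIX_t^3}$, writing $\VIX_t^{-3}=\VIX_t^{-4}\VIX_t$ to expose the factor $\VIX_t$, and observing that $-\fr{1}{8(\tau^*)^2\VIX_t^4}$ times the product of the two integrals is exactly $-\fr{1}{2} w^{(i)}_t w^{(j)}_t$, this term supplies precisely the correction $-\fr{w^{(i)}_t w^{(j)}_t}{2}$ appearing inside $u^{(i,j)}_t$. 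Collecting the four coefficient groups and rewriting in differential form yields the claimed SDE.

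I expect the only genuinely delicate points to be: (i) the justification that Ito's formula may be applied, i.e. that the path of $Y^*$ remains in $(0,\infty)$ so that $v$ is $\mathcal{C}^2$ along it, which must be inherited from whichever hypotheses guarantee $\VIX>0$; and (ii) making sure the double sum produced by $d\qv{Y^*}{Y^*}_t$ combines with the second-derivative term to give exactly $-\fr{1}{2} w^{(i)}_t w^{(j)}_t$ rather than some asymmetrised variant. The rest is direct substitution of the lemma into Ito's formula.
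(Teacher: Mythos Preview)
Your proposal is correct and follows essentially the same route as the paper's proof: apply Ito's formula to $v(y)=\sqrt{y}$, substitute the dynamics of $Y^*_t$ from the preceding lemma, compute $d\qv{Y^*}{Y^*}_t$ from the $dX^i$ part, and regroup to identify $u^{(1)}_t$, $u^{(2)}_t$, $u^{(i,j)}_t$ and $w^{(i)}_t$. Your explicit verification that $-\fr{1}{8(\tau^*)^2\VIX_t^3}$ times the product of integrals equals $-\fr{1}{2}w^{(i)}_tw^{(j)}_t\,\VIX_t$ is exactly the bookkeeping the paper performs, and your remarks on the positivity hypothesis needed for Ito's formula are a sensible addition.
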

%
%\begin{remark}
%    The above dynamics are derived under the same measure that is used for pricing derivatives on the underlying index.  A different measure may be chosen for the pricing of derivatives on the VIX, however, due to market incompleteness.
%\end{remark}
%%%
%The result can be obtained directly through the definition of expected realised variance and the Dupire Forward Equation (\cite{Dupi93} and \cite{BJNe00}).

\begin{cor}
   \label{cor:VFM:IndexToVIX}
    The dynamics of the index as specified in (A1) in Section~\ref{sect:VFM:Setup} imply that $\VIX_t$ satisfies
    \begin{align}
        \fr{d\VIX_t}{\VIX_t} &= u_t\, dt +\sum_{i=0}^d\,w^{(i)}_t\,X^i_t\,\sigma_t^i\cdot dW^\mP_t,\label{eqn:VFM:DerivedVIXSDE}
%  d\hat{C}_t &= \mu^\hat{C}_t(k)\hat{C}_t(k) dt +\sigma^\hat{C}_t(k)\hat{C}_t(k) dW^\mP_t\\
    \end{align}
    where the drift and diffusion coefficients of $\VIX$ are given by the equations
    \begin{align}
        u_t ={}& \sum_{i=0}^dw^{(i)}_t\,X_t^i\,\mu_t^i-\fr{1}{2\tau^*}+\fr{1}{2\tau^*\VIX^2_t }\int_{0}^\infty \fr{1}{k^2}\left(\fr{\pr\Theta}{\pr t}(k,\tau^*,t,\vX_t)+\fr{\pr\Theta}{\pr \tau}(k,\tau^*,t,\vX_t)\right)m(dk)-\fr{\sigma^0_t\cdot\sigma_t^0}{4\tau^*\VIX_t^2}\nonumber\\
        {}&+\sum_{i,j=0}^d\,X_t^i X_t^j\left(\fr{1}{4\tau^*\VIX_t^2}\int_{0}^\infty \fr{1}{k^2}\fr{\pr^2\Theta}{\pr x_i\pr x_j}(k,\tau^*,t,\vX_t)m(dk)-\fr{w^{(i)}_tw^{(j)}_t}{2}\right)\,\left(\sigma^i_t\cdot\sigma^j_t\right)
    \end{align}
    and
    \begin{align}
        w^{(i)}_t ={}& \fr{1}{2\tau^*\VIX^2_t }\int_{0}^\infty \fr{1}{k^2}\fr{\pr\Theta}{\pr x_i}(k,\tau^*,t,\vX_t)\,m(dk).\label{eqn:VFM:w}
    \end{align}
    \begin{proof}
        The result is an immediate consequence of Proposition~\ref{prop:VIXSDE:VIXSDE}, Equation~\eqref{eqn:VFM:Xvect} and Equation~\eqref{eqn:VFM:F_SDE}.
    \end{proof}
\end{cor}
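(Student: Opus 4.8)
The plan is to obtain Equation~\eqref{eqn:VFM:DerivedVIXSDE} by direct substitution of the index dynamics into the general representation of Proposition~\ref{prop:VIXSDE:VIXSDE}. First I would read off from Equation~\eqref{eqn:VFM:Xvect} the quadratic covariation processes: since each $X^i$ is an Ito process with diffusion coefficient $X^i_s\sigma^i_s$, one has $d\qv{X^j}{X^i}_t = X^i_tX^j_t\,(\sigma^i_t\cdot\sigma^j_t)\,dt$, and in particular, because $F\equiv X^0$, $d\qv{F}{F}_t = F_t^2\,(\sigma^0_t\cdot\sigma^0_t)\,dt$. Substituting these two identities into the statement of Proposition~\ref{prop:VIXSDE:VIXSDE} turns the two covariation terms into $dt$-terms.

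Next I would expand the remaining stochastic integral $\sum_{i=0}^d w^{(i)}_t\VIX_t\,dX^i_t$ using Equation~\eqref{eqn:VFM:Xvect} once more, writing $dX^i_t = X^i_t\mu^i_t\,dt + X^i_t\sigma^i_t\cdot dW^\mP_t$. This splits that term into a finite-variation contribution $\big(\sum_{i=0}^d w^{(i)}_t X^i_t\mu^i_t\big)\VIX_t\,dt$ and the local-martingale term $\sum_{i=0}^d w^{(i)}_t X^i_t\VIX_t\,\sigma^i_t\cdot dW^\mP_t$; the latter is exactly the diffusion term displayed in Equation~\eqref{eqn:VFM:DerivedVIXSDE}, while $w^{(i)}_t$ is unchanged from Equation~\eqref{eqn:VIM:w}, yielding Equation~\eqref{eqn:VFM:w}.

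It then remains to collect the four sources of $dt$-terms and divide by $\VIX_t$: the original drift $u^{(1)}_t$ of Proposition~\ref{prop:VIXSDE:VIXSDE}; the term $u^{(2)}_t F_t^2(\sigma^0_t\cdot\sigma^0_t)$, in which the explicit form $u^{(2)}_t = -1/(4\tau^*\VIX^2_t F_t^2)$ makes the $F_t^2$ cancel and leaves $-(\sigma^0_t\cdot\sigma^0_t)/(4\tau^*\VIX^2_t)$; the double sum $\sum_{i,j=0}^d u^{(i,j)}_t X^i_t X^j_t(\sigma^i_t\cdot\sigma^j_t)$ with $u^{(i,j)}_t$ as in Equation~\eqref{eqn:VIM:u2}; and the drift contribution $\sum_{i=0}^d w^{(i)}_t X^i_t\mu^i_t$ isolated in the previous step. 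Summing these and matching against the displayed formula for $u_t$ completes the identification. Throughout, the integrability hypotheses in (A1) --- $\int_0^{T^*}|\mu^i_s|\,ds<\infty$ and $\int_0^{T^*}|\sigma^i_s|^2\,ds<\infty$ --- together with the standing integrability assumptions on the option prices guarantee that every integral above is well defined and that passing between integral and differential form is legitimate.

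The computation is entirely mechanical, so there is no genuine obstacle; the only point requiring a little care is the bookkeeping of the $\sigma^0$ contributions, since the $i=j=0$ summand of the double sum already carries the $-w^{(0)}_t w^{(0)}_t/2$ piece of $u^{(i,j)}_t$, while the separate $u^{(2)}_t$ term supplies the complementary $-(\sigma^0_t\cdot\sigma^0_t)/(4\tau^*\VIX^2_t)$, and one must be sure not to double-count or drop either of these when assembling $u_t$.
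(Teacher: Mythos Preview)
Your proposal is correct and is exactly the approach the paper takes: the paper's proof is the single sentence ``The result is an immediate consequence of Proposition~\ref{prop:VIXSDE:VIXSDE}, Equation~\eqref{eqn:VFM:Xvect} and Equation~\eqref{eqn:VFM:F_SDE},'' and what you have written is precisely the mechanical substitution that this sentence summarises. Your bookkeeping of the $\sigma^0$ contributions is accurate and matches the displayed formula for $u_t$.
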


\begin{remark}
    Given the form of the coefficients, it is not immediately obvious that there exist solutions to the Equation~\eqref{eqn:VFM:DerivedVIXSDE}.  Under very mild assumptions, however, $V$ is a strictly-positive process and will consequently possess a semi-martingale representation with local solutions. To ensure that $V$ has global solutions and a finite expectation, which are useful for practical purposes, additional assumptions are required.
\end{remark}

\begin{remark}
    Many modelling approaches begin by assuming a semi-martingale representation for the VIX. To obtain such a representation from first principles, assumptions must be made regarding the option price processes to ensure that fundamental quantities exist and are well defined, as illustrated by the analysis performed in this section.  Despite having an appearance of simplicity, the results of this section demonstrate that there are hidden complexities associated with modelling the VIX directly.  In modelling the VIX directly, many implicit assumptions regarding fundamental quantities are made.
\end{remark}

\subsection{An alternate representation of the diffusion term}

Proposition~\ref{prop:VIXSDE:VIXSDE} provides a representation of the VIX dynamics given very general dynamics for the index future and index options.
In what follows, an alternate representation of the diffusion coefficients,

\begin{align*}
    w^{(i)}_t ={}& \fr{1}{2\tau^*\VIX^2_t }\int_{0}^\infty \fr{1}{k^2}\fr{\pr\Theta}{\pr x_i}(k,\tau^*,t,\vX_t)\,m(dk),
\end{align*}
 given in Equation~\eqref{eqn:VIM:w}, is proposed.  The following analysis is not affected by an application of Girsanov's theorem, due to the fact that the multiplicative term is measure invariant, and is hence independent of any assumptions regarding market completeness and the market price of risk.

The following corollary offers an alternate representation of the diffusion term and is required for the forthcoming analysis in Section~\ref{sect:VFM:Consistency}.

\begin{cor}
\label{cor:VMM:diffusion}
Suppose that the measure $m(\cdot)$ is chosen as specified in Definition~\ref{def:Intro:CS}.  Further suppose that there exists a measure $\mQ$, such that index futures and options are $\mQ$-martingales.  Then the diffusion term for the VIX can be expressed as
\begin{align*}
  w^{(i)}_t ={}& -\fr{N}{2\tau^*\VIX^2_t }\fr{\pr}{\pr x_i}\left(\Ex^\mQ\left[\left.\ln(F_{t+\tau^*})\right|\vX_t=\vx\right]-\ln(F_t)\right).
        % &= \mu_tdt+  \fr{1}{\tau\VIX_t(\tau)}\left(\ExQ{\fr{F_t}{F_{t+\tau }}}-1\right)\fr{1}{F_t}dF_t.
\end{align*}
\begin{proof}
    The proof is almost identical to that of Lemma~\ref{cor:3on2:VIXasLog}.  Equation~\eqref{eqn:Intro:proofVIXasLog} in the proof of Lemma~\ref{cor:3on2:VIXasLog} implies that
\begin{align*}
  \int_E^\infty\fr{1}{k^2}\,C(k,\tau^*,t,\vX_t)\,dk+\int_0^E\fr{1}{k^2}\,P(k,\tau^*,t,\vX_t)\,dk = -\ExQ{\ln(F_{t+\tau^*})}+\ln(F_t)+\fr{1}{E}(F_{t}-E).
\end{align*}
The conditional expectation on the right-hand side can be expressed in terms of the value of the process $\vX$ at time $t$, as a consequence of the Markov property.  Choosing $E=F_t$ and differentiating with respect to $x_i$ yields
\begin{align*}
  \fr{\pr}{\pr x_i}\left.\int_0^\infty\fr{1}{k^2}\,\Theta(k,\tau^*,t,\vx)\,dk\right|_{\vx=\vX_t} = -\fr{\pr}{\pr x_i}\left(\Ex^\mQ\left[\ln(F_{t+\tau})|\vX_t=\vx\right]-\ln(F_t)\right).
\end{align*}
 Multiplying both sides by $N$, dividing by $2\tau^*V_t^2$, and differentiating under the integral, which valid as a consequence of the Dominated Convergence Theorem, completes the proof.
\end{proof}
\end{cor}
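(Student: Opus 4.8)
The plan is to piggyback on the representation already obtained in the proof of Lemma~\ref{cor:3on2:VIXasLog}. First I would specialise Equation~\eqref{eqn:VIM:w} to the measure $m(dk)=N\,dk$ of Definition~\ref{def:Intro:CS}, so that
\[
w^{(i)}_t=\fr{N}{2\tau^*\VIX^2_t}\int_0^\infty\fr{1}{k^2}\fr{\pr\Theta}{\pr x_i}(k,\tau^*,t,\vX_t)\,dk ,
\]
which reduces the corollary to the identification of $\int_0^\infty k^{-2}\,\fr{\pr}{\pr x_i}\Theta(k,\tau^*,t,\vx)\,dk$ with $-\fr{\pr}{\pr x_i}\bigl(\Ex^\mQ[\ln(F_{t+\tau^*})\mid\vX_t=\vx]-\ln(F_t)\bigr)$, after which one multiplies by $N/(2\tau^*\VIX^2_t)$ and evaluates at $\vx=\vX_t$.

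Next I would recall Equation~\eqref{eqn:Intro:proofVIXasLog}, transcribed into the time-till-maturity parameterisation: under the hypothesis that index futures and options are $\mQ$-martingales, and noting that by (A2) the option prices are deterministic functions of $\vX_t$ so that the Markov property of $\vX$ permits replacing $\ExQ{\ln(F_{t+\tau^*})}$ by the function $\vx\mapsto\Ex^\mQ[\ln(F_{t+\tau^*})\mid\vX_t=\vx]$ evaluated at $\vX_t$, one has, for every $E>0$,
\[
\int_E^\infty\fr{1}{k^2}\,C(k,\tau^*,t,\vx)\,dk+\int_0^E\fr{1}{k^2}\,P(k,\tau^*,t,\vx)\,dk=-\Ex^\mQ\!\left[\left.\ln(F_{t+\tau^*})\right|\vX_t=\vx\right]+\ln(F_t)+\fr{1}{E}(F_t-E).
\]
Choosing the at-the-money reference $E=F_t$ annihilates the last term on the right and collapses the left-hand side to $\int_0^\infty k^{-2}\,\Theta(k,\tau^*,t,\vx)\,dk$, the out-of-the-money combination split at the forward. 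Differentiating this identity with respect to $x_i$ is then what remains.

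The differentiation step is where the real work lies, and it is what I expect to be the main obstacle. Two operations occur at once: one differentiates the integrand under the integral sign --- legitimate by the Dominated Convergence Theorem, using the integrability bounds on $\fr{\pr\Theta}{\pr x_i}$ inherited from the hypotheses of Proposition~\ref{prop:VFM:Vhat2} --- and, because the reference point $E=F_t$ coincides with the coordinate $x_0$ (recall $F=X^0$), Leibniz's rule for the split integral $\int_0^{x_0}k^{-2}P\,dk+\int_{x_0}^\infty k^{-2}C\,dk$ contributes a boundary term proportional to $P(x_0,\tau^*,t,\vx)-C(x_0,\tau^*,t,\vx)$. This boundary term vanishes because put--call parity written in terms of the forward forces the put and call prices to agree at the forward strike; this is the same put--call-parity cancellation flagged in the remark following Proposition~\ref{prop:VFM:Vhat2}. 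Granted this cancellation, the left side differentiates to exactly $\int_0^\infty k^{-2}\,\fr{\pr}{\pr x_i}\Theta(k,\tau^*,t,\vx)\,dk$ and the right side to $-\fr{\pr}{\pr x_i}\bigl(\Ex^\mQ[\ln(F_{t+\tau^*})\mid\vX_t=\vx]-\ln(F_t)\bigr)$; multiplying through and evaluating at $\vx=\vX_t$ delivers the stated formula.
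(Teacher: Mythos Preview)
Your proposal is correct and follows essentially the same route as the paper: invoke Equation~\eqref{eqn:Intro:proofVIXasLog}, use the Markov property to write the conditional expectation as a function of $\vx$, set $E=F_t$, differentiate in $x_i$, and then scale by $N/(2\tau^*\VIX_t^2)$ with the Dominated Convergence Theorem justifying differentiation under the integral. If anything, your treatment is more careful than the paper's: you explicitly track the Leibniz boundary contribution arising from the $x_0$-dependence of the splitting point $E=F_t=x_0$ and dispose of it via put--call parity, a point the paper leaves implicit.
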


\begin{remark}
    The market practice is heavily dependent on the notion of Black implied volatility, which is a convention that has carried across from more traditional derivative products.  Traditional implied volatility is well studied and it has many well-documented complexities.  VIX implied volatilities come with many additional complexities, due to the connection between the underlying index and the VIX, and little progress in the way of mathematical results has been made.  The representation of the diffusion term in Corollary~\ref{cor:VMM:diffusion} provides a potential first step for such an analysis.
\end{remark}

\section{Deriving the Dynamics of the VIX from the VIX Futures}
\label{sect:VFM:ImpliedDynamics}
In this section, the object of concern is the family of equations presented in Equation~\eqref{eqn:VFM:F} for the VIX futures.  The implied dynamics of the VIX is the focus of this section and the analysis is independent of any specification for the underlying index.  To derive arbitrage restrictions on the joint market of VIX and equity derivatives, the dynamics of the VIX implied by the VIX futures must first be derived.  In the following proposition, the dynamics of the VIX are derived from the family of equations for the VIX futures.  The procedure is analogous to the recovery of the short-term interest rate from the forward-rate curve, which is a well known result in interest-rate modelling.

\begin{prop}
    \label{prop:VFM:VIXdiffusion}
    Suppose that the setup of (A3) in Section~\ref{sect:VFM:Setup} is assumed and that the coefficients $\mu^\VIX(t,T)$, $\nu(t,T)$ and the initial VIX futures term structure, $\vF(0,T)$, are differentiable with respect to $T$, with bounded partial derivatives $\mu^\VIX_T(t,T)$, $\nu_T(t,T)$ and $\vF_T(0,T)$. Then the dynamics of the implied VIX are given by
    \begin{equation}
    \iVIX_t = \iVIX_0+ \int_0^t \xi_u\iVIX_u\,du +\int_0^t\iVIX_u\nu(u,u)\cdot dW^{\mP}_u,\label{eqn:VFM:iVIX}
    \end{equation}
    where $\xi$ denotes the following process
    \begin{align}
        \xi_t ={}& \fr{\vF_T(0,t)}{\vF(t,t)}+\fr{1}{\vF(t,t)}\int_0^t\mu^\VIX(u,t)\vF(u,t)\,du\nonumber\\
        {}&+\fr{1}{\vF(t,t)}\int_0^t (\vF(u,t)\,\nu_T(u,t)+\vF_T(u,t)\,\nu(u,t))\cdot\,dW^\mP_u.\label{eqn:VFM:xi}
    \end{align}
\begin{proof}
    Recall that the implied VIX satisfies
    \begin{equation*}
    \iVIX_t=\vF(t,t)=\vF(0,t)+\int_0^t\mu^\VIX(u,t)\,\vF(u,t)\,du +\int_0^t\vF(u,t)\,\nu(u,t)\,\cdot dW^{\mP}_u.
    \end{equation*}
    Therefore
    \begin{align*}
        \iVIX_t ={}& \vF(0,t) +\int_0^t\mu^\VIX(u,t)\,\vF(u,t)\,du+\int_0^t\vF(u,u)\,\nu(u,u)\,\cdot dW^{\mP}_u\\
             {}&+\int_0^t\left(\vF(u,t)\,\nu(u,t) -\vF(u,u)\,\nu(u,u)\,\right)\,\cdot dW^{\mP}_u\\
             ={}& \vF(0,t)+\int_0^t\mu^\VIX(u,t)\,\vF(u,t)\,du +\int_0^t\iVIX_u\,\nu(u,u)\,\cdot dW^{\mP}_u\\
             {}&+\int_0^t\left(\vF(u,t)\,\nu(u,t) -\vF(u,u)\,\nu(u,u)\,\right)\,\cdot dW^{\mP}_u.
    \end{align*}
For fixed $u>0$ and for each $i=0,...,d$,
    \begin{align*}
        \vF(u,t)\,\nu(u,t){}& -\vF(u,u)\,\nu(u,u)\\
                ={}&\int_u^t \fr{d}{d s}[\vF(u,s)\nu(u,s)]ds\\
                 ={}& \int_u^t \left[\vF(u,s)\,\nu_T(u,s)+\vF_T(u,s)\nu(u,s)\right]ds.
    \end{align*}
Therefore
    \begin{align*}
             \iVIX_t={}& \vF(0,t)+\int_0^t\mu^\VIX(u,t)\,\vF(u,t)\,du +\int_0^t\iVIX_u\,\nu(u,u)\cdot dW^{\mP}_u\\
             {}&+\int_0^t \int_u^t \left[\vF(u,s)\,\nu_T(u,s)+\vF_T(u,s)\nu(u,s)\right]ds\cdot dW^{\mP}_u\\
             ={}&  \vF(0,t) +\int_0^t\mu^\VIX(u,t)\,\vF(u,t)\,du+\int_0^t\iVIX_u\,\nu(u,u)\cdot dW^{\mP}_u\\
             {}&+\int_0^t \int_0^s \left[\vF(u,s)\,\nu_T(u,s)+\vF_T(u,s)\nu(u,s)\right]\cdot dW^{\mP}_u \,ds,
    \end{align*}
where stochastic Fubini's theorem has been used to interchange the order of integration.
Writing
    \begin{align*}
         \vF(0,t) = \iVIX_0 + \int_0^t\vF_T(0,u)\,du
    \end{align*}
and introducing the process $\xi$ as defined in Equation~\eqref{eqn:VFM:xi} completes the proof.
\end{proof}
\end{prop}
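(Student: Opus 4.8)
The plan is to mimic the classical HJM recovery of the short rate from the forward-rate curve, applied to the family $\vF(t,T)$ of VIX futures. The object $\iVIX_t=\vF(t,t)$ is the ``diagonal'' of the random field $\vF(\cdot,\cdot)$, so the strategy is to start from the integral form of Equation~\eqref{eqn:VFM:F} evaluated at maturity $T=t$, and to rewrite the $T$-dependence inside the $du$ and $dW^\mP_u$ integrals as an accumulation of $T$-derivatives from $u$ up to $t$. Concretely, I would first split the stochastic integral $\int_0^t\vF(u,t)\,\nu(u,t)\cdot dW^\mP_u$ into the ``on-diagonal'' part $\int_0^t\vF(u,u)\,\nu(u,u)\cdot dW^\mP_u=\int_0^t\iVIX_u\,\nu(u,u)\cdot dW^\mP_u$ plus the remainder $\int_0^t\big(\vF(u,t)\nu(u,t)-\vF(u,u)\nu(u,u)\big)\cdot dW^\mP_u$. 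The differentiability hypotheses on $\mu^\VIX(\cdot,T)$, $\nu(\cdot,T)$ and $\vF(0,\cdot)$ with bounded $T$-derivatives let me write $\vF(u,t)\nu(u,t)-\vF(u,u)\nu(u,u)=\int_u^t\big(\vF(u,s)\nu_T(u,s)+\vF_T(u,s)\nu(u,s)\big)\,ds$ by the fundamental theorem of calculus (the product rule gives $\tfrac{d}{ds}[\vF(u,s)\nu(u,s)]=\vF(u,s)\nu_T(u,s)+\vF_T(u,s)\nu(u,s)$, where $\vF_T(u,s)$ itself is obtained by differentiating Equation~\eqref{eqn:VFM:F} in $T$ under the integral sign).

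The second main step is to interchange the order of integration in the resulting iterated integral $\int_0^t\!\int_u^t(\cdots)\,ds\cdot dW^\mP_u$, turning it into $\int_0^t\!\int_0^s(\cdots)\cdot dW^\mP_u\,ds$, via the stochastic Fubini theorem. This is the step I expect to be the main technical obstacle: one must verify the integrability hypotheses of stochastic Fubini (e.g.\ that $\int_0^t\!\int_0^t\big|\vF(u,s)\nu_T(u,s)+\vF_T(u,s)\nu(u,s)\big|^2\,du\,ds<\infty$, $\mP$-a.s., or in $L^1$ as appropriate), which is where the boundedness of the $T$-derivatives and the standing integrability assumptions in (A3) on $\nu$ and $\mu^\VIX$ get used. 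The analogous differentiation-under-the-integral for $\vF_T$ also needs the Dominated Convergence Theorem, again justified by the bounded-derivative hypothesis. Once the interchange is done, the $ds$-integrand $\int_0^s\big(\vF(u,s)\nu_T(u,s)+\vF_T(u,s)\nu(u,s)\big)\cdot dW^\mP_u$ is exactly the stochastic-integral piece of $\xi_s\,\iVIX_s$.

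The final step is bookkeeping: write the deterministic initial term as $\vF(0,t)=\iVIX_0+\int_0^t\vF_T(0,u)\,du$, collect the three remaining $ds$-type contributions — $\vF_T(0,t)$, the drift term $\int_0^t\mu^\VIX(u,t)\vF(u,t)\,du$, and the Fubini'd stochastic term — and recognize their sum as $\int_0^t\xi_u\,\iVIX_u\,du$ with $\xi$ defined by Equation~\eqref{eqn:VFM:xi}, after dividing and multiplying by $\vF(u,u)=\iVIX_u$ (here one implicitly uses strict positivity of $\iVIX$, consistent with the multiplicative form of the SDEs in (A3)). What remains, $\iVIX_0+\int_0^t\iVIX_u\,\nu(u,u)\cdot dW^\mP_u$ plus this drift, is precisely Equation~\eqref{eqn:VFM:iVIX}. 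I would present the computation essentially in the displayed chain already sketched above, flagging the stochastic Fubini application and the differentiation under the integral as the two places where the regularity hypotheses are consumed, and otherwise treating the algebra as routine.
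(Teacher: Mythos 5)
Your proposal follows the paper's proof essentially line for line: the same splitting of the stochastic integral into the on-diagonal part plus a remainder, the same fundamental-theorem-of-calculus rewriting of $\vF(u,t)\nu(u,t)-\vF(u,u)\nu(u,u)$ as an accumulated $T$-derivative, the same stochastic Fubini interchange, and the same final bookkeeping via $\vF(0,t)=\iVIX_0+\int_0^t\vF_T(0,u)\,du$. The only difference is that you flag the integrability checks for stochastic Fubini and the differentiation under the integral more explicitly than the paper does, which is a point in your favour rather than a deviation.
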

%

%

% !TEX root = Market Model Paper.tex
\section{Consistency Conditions}
\label{sect:VFM:Consistency}
%\section{Risk-neutral dynamics}
%
%\label{sect:VFM:RNM}
In this section, restrictions on the dynamics of the underlying index and the family of processes for the VIX futures are derived.  These restrictions are necessary for there to be no arbitrage between the joint market of VIX and equity derivatives and are referred to as \textit{consistency conditions}.   The conditions are formulated precisely in the following two definitions.

\begin{condition}[\textbf{C1}]\label{defn:VFM:C1}
    %For there to be no arbitrage, the following \textit{consistency condition} must be satisfied:
    $\mP\,(\VIX_t=\iVIX_t)=1$, for all $0\le t\le T^*$, $\mP-a.s.$, where $\iVIX$ is the process given in Equation~\eqref{eqn:VFM:iVIX}.
\end{condition}
\begin{condition}[\textbf{C2}]\label{defn:VFM:C2}
    There exists an equivalent martingale measure, $\mQ\sim\mP$, for the underlying index and the VIX, such that futures and options on the index and futures on the VIX are $\mQ$-martingales.
\end{condition}

The first condition is a consequence of the restriction that the VIX and the dynamics implied by the VIX futures must be versions of the same process.  This condition simply requires the two different processes that one could derive for the VIX to be in agreement.  The second condition is a standard no arbitrage condition and it is obtained through an application of Girsanov's theorem.  The consistency conditions are equivalent to the forthcoming Theorem~\ref{thm:VFM:EMM1} and Theorem~\ref{thm:VFM:EMM2}.  The starting point for Theorem~\ref{thm:VFM:EMM1} is an equivalent martingale measure, from which the drift and diffusion restrictions are derived.  Theorem~\ref{thm:VFM:EMM2} starts with the drift and diffusion restrictions and is concerned with the existence of an equivalent martingale measure.  %, in which necessary conditions for there to be no arbitrage between the VIX and the index are presented.
%
%\begin{lem}
%    \label{lem:VFM:IndexToVIX}
%    The dynamics of the index as specified in Section~\ref{sect:VFM:Setup} imply that $\VIX_t$ satisfies
%%
%    \begin{align}
%        \fr{d\VIX_t}{\VIX_t} &= u_t\, dt +\sum_{i=0}^d\,w^{(i)}_t\,X^i_t\,\sigma_t^i\cdot dW^\mP_t,
%%  d\hat{C}_t &= \mu^\hat{C}_t(k)\hat{C}_t(k) dt +\sigma^\hat{C}_t(k)\hat{C}_t(k) dW^\mP_t\\
%    \end{align}
%%
%    where the drift and diffusion coefficients of $\VIX$ are given by the equations
%%
%    \begin{align}
%        u_t ={}& \sum_{i=0}^dw^{(i)}_t\,X_t^i\,\mu_t^i-\fr{1}{2}+\fr{1}{2\VIX^2_t }\int_{0}^\infty \fr{1}{k^2}\left(\fr{\pr\Theta}{\pr t}(k,\tau^*,t,\vX_t)+\fr{\pr\Theta}{\pr \tau}(k,\tau^*,t,\vX_t)\right)dk-\fr{\sigma^0_t\cdot\sigma_t^0}{(2\VIX_t)^2}\nonumber\\
%        {}&+\sum_{i,j=0}^d\,X_t^i X_t^j\left(\fr{1}{4\VIX_t^2}\int_{0}^\infty \fr{1}{k^2}\fr{\pr^2\Theta}{\pr x_i\pr x_j}(k,\tau^*,t,\vX_t)dk-\fr{w^{(i)}_tw^{(j)}_t}{2}\right)\,\left(\sigma^i_t\cdot\sigma^j_t\right)\mbox{ }\nonumber\\\,\nonumber\\
%        w^{(i)}_t ={}& \fr{1}{2\VIX^2_t }\int_{0}^\infty \fr{1}{k^2}\fr{\pr\Theta}{\pr x_i}(k,\tau^*,t,\vX_t)\,dk,\label{eqn:VFM:w}
%    \end{align}
%%
%    $\mP$-a.s., for all $t>0$.
%%
%    \begin{proof}
%        The result is an immediate consequence of Proposition~\ref{prop:VIXSDE:VIXSDE}, Equation~\eqref{eqn:VFM:Xvect} and Equation~\eqref{eqn:VFM:F_SDE}.
%    \end{proof}
%\end{lem}
%
\begin{thm}
\label{thm:VFM:EMM1}
    Suppose that there exists a measure $\mQ\sim\mP$ such that futures on the index and futures on the VIX are $\mQ$-martingales.  Then there exists a market price of risk, $\lambda$, with $\int_0^{T^*}|\lambda_s|^2\,ds<\infty$, $\mP$-a.s., such that for all $T\in[0,T^*-\tau^*]$ and $t\in[0,T]$,
\begin{align}
    \mu^0_t ={}& -\lambda_t\cdot\sigma^0_t,\label{eqn:VFM:CC1}
\end{align}
\begin{equation}
        \mu^\VIX(t,T) = -\lambda_t\cdot\nu(t,T),\label{eqn:VFM:CC2}
\end{equation}
for each $j=0,...,d$,
\begin{align}
      \nu^j(t,t)={}& \sum_{i=0}^d\left(\fr{X_t^i}{2\tau^*\VIX^2_t }\int_{0}^\infty \fr{1}{k^2}\fr{\pr\Theta}{\pr x_i}(k,\tau^*,t,\vX_t)\,m(dk)\,\sigma_t^{i,j}\right),\label{eqn:VFM:CC3}
\end{align}
with $\sigma^i = (\sigma^{i,1},\sigma^{i,2},...,\sigma^{i,d})^T$ and
\begin{align}
        \fr{1}{2\tau^*}+\fr{\sigma^0_t\cdot\sigma_t^0}{4\tau^*\VIX_t^2}+\sum_{i,j=0}\fr{1}{2}\,&{X_t^iX_t^j\,w^{(i)}_tw^{(j)}_t}\,\left(\sigma^i_t\cdot\sigma^j_t\right)
                +\lambda_t\cdot\sum_{i=0}^d\,w_t^i\,X_t^i\,\sigma^i_t\nonumber\\
                ={}&\fr{1}{\vF(t,t)}\int_0^t\left(\lambda_t\cdot\nu(t,t)\right)\vF(u,t)\,du-\fr{\vF_T(0,t)}{\vF(t,t)}\nonumber\\
        {}&-\fr{1}{\vF(t,t)}\int_0^t (\vF(u,t)\,\nu_T(u,t)+\vF_T(u,t)\,\nu(u,t))\cdot\,dW^\mP_u,\quad\mP\mbox{-a.s..}\label{eqn:VFM:CC4}
\end{align}

\end{thm}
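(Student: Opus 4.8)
The plan is to use Girsanov's theorem to produce the market price of risk, to read off~\eqref{eqn:VFM:CC1} and~\eqref{eqn:VFM:CC2} from the $\mQ$-martingale property of the index and the VIX futures, and then to combine consistency condition~\textbf{C1} with the uniqueness of the canonical decomposition of a continuous semimartingale to extract~\eqref{eqn:VFM:CC3} and~\eqref{eqn:VFM:CC4}. First I would apply Girsanov. Since $\mQ\sim\mP$ on $\filt{T^*}$ and $\mathbb{F}$ is generated by $W^\mP$, the density process $Z_t=\Ex^\mP[\,\fr{d\mQ}{d\mP}\,|\,\filt{t}\,]$ is a strictly positive $\mP$-martingale, so by the martingale representation theorem $Z=\mathcal{E}\bigl(\int_0^{\cdot}\lambda_s\cdot dW^\mP_s\bigr)$ for some $\mathbb{F}$-adapted process $\lambda$ with $\int_0^{T^*}|\lambda_s|^2\,ds<\infty$, $\mP$-a.s.; then $W^\mQ_t:=W^\mP_t-\int_0^t\lambda_s\,ds$ is a $\mQ$-Brownian motion, so $dW^\mP_t=dW^\mQ_t+\lambda_t\,dt$.

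Substituting $dW^\mP_t=dW^\mQ_t+\lambda_t\,dt$ into~\eqref{eqn:VFM:F_SDE} gives $dF_t=F_t(\mu^0_t+\lambda_t\cdot\sigma^0_t)\,dt+F_t\,\sigma^0_t\cdot dW^\mQ_t$; since $F$ is a strictly positive $\mQ$-martingale, uniqueness of its canonical decomposition forces the $dt$-coefficient to vanish $dt\otimes d\mP$-a.e., which is~\eqref{eqn:VFM:CC1}. The identical substitution in~\eqref{eqn:VFM:F}, for each fixed $T\in[0,T^*-\tau^*]$, together with the $\mQ$-martingale property of $\vF(\cdot,T)$, yields~\eqref{eqn:VFM:CC2}.

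Next I would invoke consistency. By~\textbf{C1} the VIX satisfying~\eqref{eqn:VFM:DerivedVIXSDE} (derived from the index in Corollary~\ref{cor:VFM:IndexToVIX}) and the implied VIX satisfying~\eqref{eqn:VFM:iVIX} (derived from the VIX futures in Proposition~\ref{prop:VFM:VIXdiffusion}) are indistinguishable; both are continuous $\mP$-semimartingales with absolutely continuous finite-variation parts, so uniqueness of the canonical decomposition forces their martingale parts to coincide and their drift parts to coincide. Equating the martingale parts and cancelling $\VIX_t=\iVIX_t>0$ gives the vector identity $\nu(t,t)=\sum_{i=0}^d w^{(i)}_t X^i_t\,\sigma^i_t$ in $\R^{d+1}$; reading this off coordinate by coordinate and inserting the explicit form~\eqref{eqn:VFM:w} of $w^{(i)}_t$ is precisely~\eqref{eqn:VFM:CC3}.

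Finally, equality of the drift parts reads $u_t=\xi_t$, $dt\otimes d\mP$-a.e. Into this scalar identity I would substitute the formula for $u_t$ from Corollary~\ref{cor:VFM:IndexToVIX}, the formula~\eqref{eqn:VFM:xi} for $\xi_t$ with $\mu^\VIX(u,t)$ eliminated via~\eqref{eqn:VFM:CC2}, the relation~\eqref{eqn:VFM:CC1}, and the vector identity $\nu(t,t)=\sum_i w^{(i)}_t X^i_t\sigma^i_t$ just obtained; after regrouping the terms and multiplying through by $\vF(t,t)^{-1}$ this rearranges into the form~\eqref{eqn:VFM:CC4}. I expect this last step to be the only genuine obstacle: it is a long but essentially mechanical computation, in which one must track the iterated stochastic-Fubini term $\int_0^t\!\int_0^s\bigl(\vF(u,s)\,\nu_T(u,s)+\vF_T(u,s)\,\nu(u,s)\bigr)\cdot dW^\mP_u\,ds$ hidden inside $\xi_t$, use the differentiability hypotheses on $\mu^\VIX(\cdot,T)$, $\nu(\cdot,T)$ and $\vF(0,\cdot)$ to write $\vF(0,t)=\iVIX_0+\int_0^t\vF_T(0,u)\,du$, and keep in mind that the auxiliary state variables $X^1,\dots,X^d$ are not traded, so that their drifts $\mu^1,\dots,\mu^d$ are individually unconstrained and are tied down only through the drift identity $u_t=\xi_t$. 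Everything else is a direct application of Girsanov's theorem and of the uniqueness of semimartingale decompositions.
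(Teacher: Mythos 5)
Your overall strategy matches the paper's: Girsanov to produce $\lambda$, martingality of $F$ and of $\vF(\cdot,T)$ to obtain \eqref{eqn:VFM:CC1} and \eqref{eqn:VFM:CC2}, and then Condition \textbf{C1} together with uniqueness of the semimartingale decomposition to equate the diffusion and drift of the two representations of the VIX, yielding \eqref{eqn:VFM:CC3} and \eqref{eqn:VFM:CC4}. The diffusion step is fine. But there is a genuine gap in your derivation of \eqref{eqn:VFM:CC4}: you propose to substitute the \emph{raw} drift $u_t$ from Corollary~\ref{cor:VFM:IndexToVIX} into $u_t=\xi_t$ and claim that \eqref{eqn:VFM:CC1}, \eqref{eqn:VFM:CC2} and the identity $\nu(t,t)=\sum_i w^{(i)}_tX^i_t\sigma^i_t$ suffice to rearrange it into \eqref{eqn:VFM:CC4}. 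They do not. The raw $u_t$ still contains the untraded drifts $\mu^1,\dots,\mu^d$ (which, as you yourself note, are not constrained by \eqref{eqn:VFM:CC1} or \eqref{eqn:VFM:CC2}), the integral of $\fr{\pr\Theta}{\pr t}+\fr{\pr\Theta}{\pr\tau}$, and the integral of $\fr{\pr^2\Theta}{\pr x_i\pr x_j}$; none of these appear on the left-hand side of \eqref{eqn:VFM:CC4}, which involves only $\tau^*$, $\sigma$, $w^{(i)}$ and $\lambda$.

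The missing ingredient is the $\mQ$-martingale property of the index \emph{options}, which is part of Condition \textbf{C2} and is used explicitly in the paper's proof. Applying It\^o to $\hat\Theta(k,T,t,\vX_t)$, changing measure, and imposing that the drift vanishes gives, after passing to time-till-expiry variables and integrating against $k^{-2}\,m(dk)$,
\begin{align*}
\fr{1}{2\tau^*\VIX_t^2}\int_0^\infty\fr{1}{k^2}\left(\fr{\pr\Theta}{\pr t}+\fr{\pr\Theta}{\pr \tau}\right)m(dk)
+\sum_{i=0}^d w^{(i)}_tX^i_t\mu^i_t
+\fr{1}{4\tau^*\VIX_t^2}\sum_{i,j=0}^dX^i_tX^j_t\int_0^\infty\fr{1}{k^2}\fr{\pr^2\Theta}{\pr x_i\pr x_j}\,m(dk)\,(\sigma^i_t\cdot\sigma^j_t)
=-\lambda_t\cdot\sum_{i=0}^d w^{(i)}_tX^i_t\sigma^i_t,
\end{align*}
and it is precisely this cancellation that collapses $u_t$ to $-\bigl[\fr{1}{2\tau^*}+\fr{\sigma^0_t\cdot\sigma^0_t}{4\tau^*\VIX_t^2}+\sum_{i,j}\fr12 X^i_tX^j_tw^{(i)}_tw^{(j)}_t(\sigma^i_t\cdot\sigma^j_t)+\lambda_t\cdot\sum_i w^{(i)}_tX^i_t\sigma^i_t\bigr]$, i.e.\ to the negative of the left-hand side of \eqref{eqn:VFM:CC4}. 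Without invoking the option martingality (or some equivalent pricing-consistency statement for $\Theta$), the identity $u_t=\xi_t$ you end with is not the stated condition \eqref{eqn:VFM:CC4}. Your remark that the $\mu^i$ "are tied down only through the drift identity" is the symptom of this: in the paper they are in fact tied down by the option drift restriction before the two VIX drifts are ever compared.
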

\begin{proof}
\noindent Let $\mQ\sim\mP$ be an equivalent martingale measure on $(\Omega,\filt{T^*})$.  Then the existence of a $\lambda$, with $\int_0^{T^*}|\lambda_s|^2\,ds<\infty$, $\mP$-a.s., is a direct consequence of the Radon-Nikod\'ym theorem.
By Girsanov's theorem
    \begin{equation}
        W^\mQ_t = W^\mP_t - \int_0^t\lambda_s\,ds\label{eqn:VFM:measurechange}
    \end{equation}
    defines a Brownian motion under $\mQ$.
    Therefore
    \begin{align*}
          \fr{dF_t}{F_t}  &= \mu^0_tdt +\sigma^0_t\cdot\left(dW^\mQ_t+\lambda_tdt\right)\\
                &= \left(\mu^0_t+\lambda_t\cdot\sigma^0_t\right)dt +\sigma^0_t\cdot dW^\mQ_t
    \end{align*}
    and
    \begin{align*}
    d\hat\Theta(k,{}&T,t,\vX_t) \\={}& \fr{\pr\hat\Theta}{\pr t}(k,T,t,\vX_t)\,dt+\sum_{i=0}^d\fr{\pr\hat \Theta}{\pr x_i}(k,T,t,\vX_t)\,dX^i_t+\fr{1}{2}\sum_{i,j=0}^d\fr{\pr^2\hat \Theta}{\pr x_i\pr x_j}(k,T,t,\vX_t)\,d\qv{X^j}{X^i}_t \\
          ={}& \fr{\pr\hat\Theta}{\pr t}(k,T,t,\vX_t)\,dt
            +\sum_{i=0}^d\fr{\pr\hat \Theta}{\pr x_i}(k,T,t,\vX_t)X_t^i\,\mu^i_t\,dt
            \\{}&+\fr{1}{2}\sum_{i,j=0}^d\fr{\pr^2\hat \Theta}{\pr x_i\pr x_j}(k,T,t,\vX_t)\,d\qv{X^j}{X^i}_t+\sum_{i=0}^d\fr{\pr\hat\Theta}{\pr x_i}(k,T,t,\vX_t)X_t^i\sigma^i_t\cdot dW^\mP_t\\
          ={}& \fr{\pr\hat\Theta}{\pr t}(k,T,t,\vX_t)\,dt
            +\sum_{i=0}^d\fr{\pr\hat \Theta}{\pr x_i}(k,T,t,\vX_t)\left(\mu^i_t+\lambda_t\cdot\sigma_t\right)X_t^i\,dt
            \\{}&
            +\fr{1}{2}\sum_{i,j=0}^d\fr{\pr^2\hat \Theta}{\pr x_i\pr x_j}(k,T,t,\vX_t)\,d\qv{X^j}{X^i}_t+\sum_{i=0}^d\fr{\pr\hat\Theta}{\pr x_i}(k,T,t,\vX_t)X_t^i\sigma^i_t\cdot dW^\mQ_t.
    \end{align*}
    The no-arbitrage condition, stated as Condition~\ref{defn:VFM:C2} \textbf{(C2)}, implies that, for all $t\in[0,T^*]$,
    \begin{equation*}
        \mu^0_t = -\lambda_t\cdot\sigma^0_t,\quad \mQ-a.s.,
    \end{equation*}
    and that for all $k>0$ and each $T\in(t,T^*]$,
    \begin{align*}
        \int_0^t\left(\fr{\pr\hat\Theta}{\pr t}\right.&(k,T,s,\vX_s)\,ds
            +\sum_{i=0}^d\fr{\pr\hat \Theta}{\pr x_i}(k,T,s,\vX_s)X_s^i\,\mu^i_s\,ds\\
            {}&\hspace{-0.7cm}+\left.\fr{1}{2}\sum_{i,j=0}^d\fr{\pr^2\hat \Theta}{\pr x_i\pr x_j}(k,T,s,\vX_s)\,d\qv{X^j}{X^i}_s\right)
            =-\int_0^t\left(\lambda_s\cdot\sum_{i=0}^d\fr{\pr\hat \Theta}{\pr x_i}(k,T,s,\vX_s)X_s^i\sigma^i_s\right)\,ds,%-\lambda_t\sigma_tF_t\fr{\pr\hat\Theta}{\pr F}(k,T;F_t),\,\, \forall t,k>0.
    \end{align*}
    $\mQ-a.s.$.  Introducing the change of variable $T:=t+\tau$ and expressing option prices as a function of time-till expiry, $\Theta(\cdot,\tau,\cdot,\cdot):=\hat\Theta(\cdot,t+\tau,\cdot,\cdot)$, for all $\tau\in[0,T^*]$ and $t\in[0,T^*-\tau]$, gives
    \begin{align}
        \int_0^t\left(\fr{\pr\Theta}{\pr t}\right.&(k,\tau,s,\vX_s)\,ds
            +\fr{\pr\Theta}{\pr \tau}(k,\tau,s,\vX_s)\,ds\nonumber
            +\sum_{i=0}^d\fr{\pr \Theta}{\pr x_i}(k,\tau,s,\vX_s)X_s^i\,\mu^i_s\,ds\\
            {}&\hspace{-0.7cm}+\left.\fr{1}{2}\sum_{i,j=0}^d\fr{\pr^2 \Theta}{\pr x_i\pr x_j}(k,\tau,s,\vX_s)\,d\qv{X^j}{X^i}_s\right)
            =-\int_0^t\left(\lambda_s\cdot\sum_{i=0}^d\fr{\pr \Theta}{\pr x_i}(k,\tau,s,\vX_s)X_s^i\sigma^i_s\right)\,ds,\label{eqn:VFM:ThetaDrift}
    \end{align}
    for all $k>0$.  Corollary~\ref{cor:VFM:IndexToVIX} and Equation~\eqref{eqn:VFM:ThetaDrift} evaluated at $\tau=\tau^*$ imply that
    \begin{align}
\fr{d\VIX_t}{\VIX_t} ={}&-\left[\fr{1}{2\tau^*}+\fr{\sigma^0_t\cdot\sigma_t^0}{4\tau^*\VIX_t^2}+\sum_{i,j=0}\fr{1}{2}\,{X_t^iX_t^j\,w^{(i)}_tw^{(j)}_t}\,\left(\sigma^i_t\cdot\sigma^j_t\right)
                +\lambda_t\cdot\sum_{i=0}^d\,w_t^i\,X_t^i\sigma^i_t\right]\,dt\nonumber\\
                {}&+\sum_{i=0}^dw^{(i)}_t\,X_t^i\,\sigma_t^i\cdot dW^\mP_t.\label{eqn:VFM:VIXSDE1}
    \end{align}

    The VIX futures dynamics given by the family of processes in Equation~\eqref{eqn:VFM:F} satisfy, under the risk-neutral measure $\mQ$,
    \begin{align*}
       d\vF(t,T) ={}& \left(\mu^\VIX(t,T)+\lambda_t\cdot\nu(t,T)\right)\vF(t,T)\,dt+ \vF(t,T)\,\nu(t,T)\cdot dW^\mQ_t.
    \end{align*}
    The no-arbitrage condition, stated as Condition~\ref{defn:VFM:C2} \textbf{(C2)}, again implies that, for all $t\in[0,T]$,
    \begin{equation}
        \mu^\VIX(t,T) = -\lambda_t\cdot\nu(t,T),\quad \mQ-a.s..\label{eqn:VFM:VfutDR}
    \end{equation}
    Proposition~\ref{prop:VFM:VIXdiffusion} and Equation~\eqref{eqn:VFM:VfutDR} imply that the VIX satisfies the dynamics
    \begin{equation}
        d\iVIX_t = \xi_t\iVIX_t\,dt +\iVIX_t\,\nu(t,t)\cdot dW^\mQ_t,
        \label{eqn:VFM:VIXSDE2}
    \end{equation}
    where the process $\xi$ is defined by
    \begin{align*}
        \xi_t ={}& \fr{\vF_T(0,t)}{\vF(t,t)}-\fr{1}{\vF(t,t)}\int_0^t\left(\lambda_t\cdot\nu(t,t)\right)\vF(u,t)\,du\nonumber\\
        {}&+\fr{1}{\vF(t,t)}\int_0^t (\vF(u,t)\,\nu_T(u,t)+\vF_T(u,t)\,\nu(u,t))\cdot\,dW^\mP_u.
    \end{align*}
    Condition~\ref{defn:VFM:C1} \textbf{(C1)} requires for the VIX and the dynamics of the implied VIX to be versions of the same process.  Imposing the condition that the drift and diffusion coefficients in Equation~\eqref{eqn:VFM:VIXSDE1} and Equation~\eqref{eqn:VFM:VIXSDE2} must be equal completes the proof.
\end{proof}
\begin{thm}
        \label{thm:VFM:EMM2}
        Suppose that $\mu$, $\sigma$, $\xi$ and $\nu$ satisfy, as functions of $F$ and $\vF$, Equations~\eqref{eqn:VFM:CC1}-\eqref{eqn:VFM:CC4}, for all $T\in [0,T^*-\tau^*]$ and all $t\in[0,T]$, $\mP$-a.s., for some process $\lambda$, with $\int_0^{T^*}|\lambda_s|^2\,ds<\infty$, $\mP$-a.s., and
       \[\Ex^{\mP}\left[\left.\mathcal{E}\left(\int_0^{\cdot} \lambda_s dW^\mP_s\right)_{T^*}\right|\filt{0}\right]=1.\]
       Further suppose that there exists an adapted process $F$ on $[0,T^*-\tau^*]$ and a family of adapted processes $\vF(\cdot,T)$, for all $T\in[0,T^*-\tau^*]$, on $[0,T]$ satisfying Equation~\eqref{eqn:VFM:F_SDE} and Equation~\eqref{eqn:VFM:F}. Then there exists an equivalent measure, $\mQ\sim\mP$, on $\filt{T^*}$, for $(F_t)_{0\le t\le T^*}$ and $(\vF(t,T))_{0\le t\le T}$, for all $T\in[0,T^*]$, such that futures on the index and futures on the VIX are $\mQ$-martingales.
\end{thm}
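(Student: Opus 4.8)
The plan is to build $\mQ$ explicitly from the given market price of risk $\lambda$ and then verify the martingale property by a Girsanov computation that runs, in reverse, the argument in the proof of Theorem~\ref{thm:VFM:EMM1}. First I would define the candidate density on $\filt{T^*}$ by $\frac{d\mQ}{d\mP}\big|_{\filt{T^*}} := \mathcal{E}\big(\int_0^{\cdot}\lambda_s\,dW^\mP_s\big)_{T^*}$. Since $\int_0^{T^*}|\lambda_s|^2\,ds<\infty$ $\mP$-a.s., the process $Z_t := \mathcal{E}\big(\int_0^{\cdot}\lambda_s\,dW^\mP_s\big)_t$ is a well-defined nonnegative $\mP$-local martingale, hence a supermartingale; the hypothesis $\Ex^\mP[Z_{T^*}|\filt{0}]=1$ then forces $Z$ to be a genuine $\mP$-martingale, so $\mQ$ is a probability measure with $\mQ\sim\mP$ on $\filt{T^*}$ (as $Z_{T^*}>0$ $\mP$-a.s.). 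Girsanov's theorem then gives that $W^\mQ_t := W^\mP_t - \int_0^t\lambda_s\,ds$ is a $(d+1)$-dimensional $\mQ$-Brownian motion, and one notes that this single $\mQ$ serves the index and all VIX-futures maturities simultaneously, because the same $\lambda$ appears in every one of Equations~\eqref{eqn:VFM:CC1}--\eqref{eqn:VFM:CC2}.

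The second step substitutes $dW^\mP_t = dW^\mQ_t + \lambda_t\,dt$ into the $\mP$-dynamics of $F$ in Equation~\eqref{eqn:VFM:F_SDE} and of $\vF(\cdot,T)$ in Equation~\eqref{eqn:VFM:F}. For the index future this gives $\frac{dF_t}{F_t} = (\mu^0_t + \lambda_t\cdot\sigma^0_t)\,dt + \sigma^0_t\cdot dW^\mQ_t$, and Equation~\eqref{eqn:VFM:CC1} makes the drift vanish identically, so $F = F_0\,\mathcal{E}\big(\int_0^{\cdot}\sigma^0_s\cdot dW^\mQ_s\big)$ is a nonnegative $\mQ$-local martingale; likewise, for each admissible $T$, Equation~\eqref{eqn:VFM:CC2} kills the drift of $\vF(\cdot,T)$, which becomes $\frac{d\vF(t,T)}{\vF(t,T)} = \nu(t,T)\cdot dW^\mQ_t$, again a nonnegative $\mQ$-local martingale. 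Conditions~\eqref{eqn:VFM:CC3} and~\eqref{eqn:VFM:CC4} are not required for this step; their role is to enforce Condition~\ref{defn:VFM:C1}, i.e.\ that the option-implied VIX of Corollary~\ref{cor:VFM:IndexToVIX} and the futures-implied VIX of Proposition~\ref{prop:VFM:VIXdiffusion} are versions of the same process, so that ``the VIX futures'' refers to a single, unambiguous object in the joint model — one would invoke them by running the final identification step of the proof of Theorem~\ref{thm:VFM:EMM1} in reverse.

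The main obstacle is upgrading ``$\mQ$-local martingale'' to ``$\mQ$-martingale''. A nonnegative local martingale is a supermartingale, hence a true martingale on $[0,T^*]$ exactly when its $\mQ$-expectation is constant in time; for this I would either (i) read the conclusion in the local-martingale sense that already suffices for absence of arbitrage in the NFLVR sense, or (ii) adjoin the standard sufficient integrability — for instance a Novikov-type condition $\Ex^\mQ\big[\exp\big(\tfrac12\int_0^{T^*}|\sigma^0_s|^2\,ds\big)\big]<\infty$ and, for the relevant $T$, $\Ex^\mQ\big[\exp\big(\tfrac12\int_0^{T}|\nu(s,T)|^2\,ds\big)\big]<\infty$, which is a genuine but routine strengthening of the pathwise integrability already assumed in (A1) and (A3) — so that the Dol\'eans exponentials representing $F$ and $\vF(\cdot,T)$ are true $\mQ$-martingales. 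I would flag this explicitly rather than suppress it, because the coefficients appearing in Corollary~\ref{cor:VFM:IndexToVIX} are singular in $\VIX_t$ and $F_t$ and need not be bounded. A final bookkeeping remark: Equations~\eqref{eqn:VFM:CC1}--\eqref{eqn:VFM:CC4} are posited for $T\in[0,T^*-\tau^*]$, so the $\vF(\cdot,T)$ martingale conclusion is obtained for those $T$, which is precisely the range over which a VIX future is well defined given index options maturing no later than $T^*$.
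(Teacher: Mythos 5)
Your proposal follows essentially the same route as the paper: the paper's own proof is a two-line stub that takes $\frac{d\mQ}{d\mP}=\mathcal{E}\big(\int_0^{\cdot}\lambda_s\,dW^\mP_s\big)_{T^*}$, invokes Girsanov, and defers everything else to running the proof of Theorem~\ref{thm:VFM:EMM1} in reverse, which is precisely what you carry out in detail (construction of the density, verification that it is a probability via the assumed unit expectation, drift cancellation for $F$ and $\vF(\cdot,T)$ from Equations~\eqref{eqn:VFM:CC1}--\eqref{eqn:VFM:CC2}, and the role of \eqref{eqn:VFM:CC3}--\eqref{eqn:VFM:CC4} in enforcing Condition~\ref{defn:VFM:C1}). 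Your explicit observation that the drift cancellation only yields nonnegative $\mQ$-local martingales, so that either an added integrability condition or a local-martingale reading of the conclusion is needed, flags a genuine gap that the paper's own proof passes over silently.
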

\begin{remark}
    One such measure is given by
\begin{equation}
  \fr{d \mQ}{d \mP} := \mathcal{E}\left(\int_0^{\cdot} \lambda_s dW^\mP_s\right)_{T^*},
\end{equation}
where $\mathcal{E}(\cdot)$ denotes the stochastic exponential.
\end{remark}
\begin{proof}
    The existence of a measure $\mQ\sim\mP$ is obtained through a direct application of Girsanov's Theorem.  The proof of the remainder of Theorem~\ref{thm:VFM:EMM2} follows from the proof of Theorem~\ref{thm:VFM:EMM1}.
\end{proof}
\begin{remark}
The martingale measure of Theorem \ref{thm:VFM:EMM2} may not be unique due to the fact that there may be sources of risk that are not traded. From a practical perspective, the ability to replicate contingent claims is potentially of more interest than the theoretical completeness of a model.  Theorem \ref{thm:VFM:EMM1} and Theorem~\ref{thm:VFM:EMM2} provide an arbitrage-free representation that enables the direct hedging of VIX options with VIX futures contracts.
\end{remark}
\begin{remark}
    Restrictions on both the drift and diffusion coefficients of the index and VIX futures are imposed in Theorem \ref{thm:VFM:EMM1} and Theorem~\ref{thm:VFM:EMM2}.  The non-uniqueness of the martingale measure enters the theorem through the presence of $\lambda$ in the drift restrictions. The diffusion coefficient, however, is not affected by an application of Girsanov's theorem, due to the fact that the multiplicative term is measure invariant.  An analysis of the diffusion coefficients is hence independent of any assumptions regarding market completeness and the market price of risk.
     %A study of the restriction of the diffusion coefficient is a Since the objective is to develop a framework in which VIX options are priced and hedged relative to VIX futures,
\end{remark}
The restrictions on the diffusion terms are rather non-standard and the interpretation is not immediately clear, due to the presence of the term $V_t$.  The forthcoming Theorem~\ref{thm:VFM:genPDE} is concerned with the implications of Theorem~\ref{thm:VFM:EMM1} and Theorem~\ref{thm:VFM:EMM2} on the diffusion term for the VIX.  Before introducing the theorem, consider the following assumption regarding the diffusion terms.  The functional form is motivated from the perspective of specifying a model for the VIX futures.  The reason that the diffusion coefficient for the index is dependent upon $\VIX_t$, is that this may be an implicit consequence from the assumed dynamics for the VIX futures.% In other words, the dynamics of the VIX futures may imply that the diffusion coefficient of the index must be dependent upon $\VIX_t$.
 \begin{assumption}
    The diffusion coefficient $\sigma$ is a function of the processes $X$ and $\VIX$, and the process $\nu$ is a functional of the VIX futures curve $\vF$, such that
    \begin{align*}
        \sigma_t=\sigma(t,\vX_t,\VIX_t),\quad\mbox{and}\quad\nu(t,T) = \nu(t,T,\vF(t,T)).
    \end{align*}
\end{assumption}

The partial-differential equation (PDE) in the following theorem provides a restriction on the choice of the diffusion coefficients, $\sigma$ and $\nu$, so that the joint dynamics are in agreement.  The result is somewhat of an inverse problem; the solution to the equation is given, while the coefficients are unknown.

\begin{thm}
\label{thm:VFM:genPDE}
    Suppose that the measure $m(\cdot)$ is chosen as specified in Definition~\ref{def:Intro:CS}.  Then, for there to be no joint arbitrage in the sense of Theorem~\ref{thm:VFM:EMM1} and Theorem~\ref{thm:VFM:EMM2}, for all $t\ge 0$ and $\vx\in\Rp^{d+1}$,
    \begin{align}
        \sum_{i=0}^d\,x^i\,\sigma^{i,j}(t,\vx,\sqrt{h(t,\vx)})\fr{\pr h}{\pr x_i}(t,\vx)= 2h(t,\vx)\,\nu^j(t,t,\sqrt {h(t,\vx)}),\quad j=0,...,d,\label{eqn:VFM:genPDE}
    \end{align}
    with $h(t,\vX_t)=V^2_t$.
    \begin{proof}
    Equation~\eqref{eqn:VFM:CC3} in Theorem \ref{thm:VFM:EMM1} states that for each $j=0,...,d$,
\begin{align}
      \nu^j(t,t,V_t)={}& \sum_{i=0}^d\left(\fr{X_t^i}{2\tau^*\VIX^2_t }\int_{0}^\infty \fr{1}{k^2}\fr{\pr\Theta}{\pr x_i}(k,\tau^*,t,\vX_t)\,m(dk)\,\sigma_t^{i,j}\right),\label{eqn:VFM:diffPDEproof}
\end{align}
    with $\sigma^i = (\sigma^{i,1},\sigma^{i,2},...,\sigma^{i,d})^T$. Recall Corollary~\ref{cor:VMM:diffusion}, which states the following alternate representation of the diffusion term
    \begin{align*}
  \fr{1}{2\tau^*\VIX^2_t }\int_{0}^\infty \fr{1}{k^2}\fr{\pr\Theta}{\pr x_i}(k,\tau^*,t,\vX_t)\,m(dk) ={}& -\fr{N}{2\tau^*\VIX^2_t }\fr{\pr}{\pr x_i}\left(\Ex^\mQ\left[\left.\ln(F_{t+\tau})\right|\vX_t=\vx\right]-\ln(F_t)\right).
        % &= \mu_tdt+  \fr{1}{\tau\VIX_t(\tau)}\left(\ExQ{\fr{F_t}{F_{t+\tau }}}-1\right)\fr{1}{F_t}dF_t.
\end{align*}
    Lemma~\ref{cor:3on2:VIXasLog} and the Markov property imply that the VIX can be represented as
    \begin{equation*}
        \VIX^2_t = {- \fr{N}{\tau^*}\Ex^\mQ\left[\ln  \left.\left(\frac{F_{t+\tau^*}}{F_t}\right)\right|\vX_t =\vx \right]}=:h(t,\vx),
    \end{equation*}
    for some function $h:[0,T^*]\times\Rp^{d+1}\to\Rp$, and hence
\begin{align}
      \nu^j(t,t,\sqrt{h(t,\vx)})={}& \sum_{i=0}^d\left(\fr{x^i}{2h(t,\vx) }\,\sigma^{i,j}(t,\vx,\sqrt{h(t,\vx)})\,\fr{\pr h}{\pr x_i}(t,\vx)\right),\quad j=0,...,d.%\label{eqn:VFM:diffPDEproof}
\end{align}
%]
    Multiplying both sides by $2h(t,\vx)$ completes the proof.
\end{proof}

\end{thm}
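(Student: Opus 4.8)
The plan is to recognise Equation~\eqref{eqn:VFM:genPDE} as a reformulation of the diffusion restriction \eqref{eqn:VFM:CC3} of Theorem~\ref{thm:VFM:EMM1}, obtained by rewriting the integral of option-price derivatives appearing there by means of the log-contract representations in Lemma~\ref{cor:3on2:VIXasLog} and Corollary~\ref{cor:VMM:diffusion}. The crux is the pointwise identity
\[
  w^{(i)}_t \;=\; \fr{1}{2h(t,\vx)}\,\fr{\pr h}{\pr x_i}(t,\vx), \qquad i = 0,\dots,d,
\]
valid on $\{\vX_t = \vx\}$. Granting it, one substitutes into \eqref{eqn:VFM:CC3}, uses the functional-form assumption $\sigma_t = \sigma(t,\vX_t,\VIX_t)$ together with Condition~\ref{defn:VFM:C1} (so that $\nu^j(t,t)=\nu^j(t,t,\iVIX_t)=\nu^j(t,t,\VIX_t)$) to write $\sigma^{i,j}_t = \sigma^{i,j}(t,\vx,\sqrt{h(t,\vx)})$ and $\VIX_t^2 = h(t,\vx)$, replaces $X_t^i$ by $x^i$, and multiplies through by $2h(t,\vx)$; this yields \eqref{eqn:VFM:genPDE} at once.

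To establish the identity I would argue in three short steps. First, for the continuous-strike choice of $m(\cdot)$ the coefficients $w^{(i)}_t$ in \eqref{eqn:VFM:CC3} coincide with those of Equation~\eqref{eqn:VIM:w}; since the hypotheses of Theorem~\ref{thm:VFM:EMM1} furnish an equivalent measure $\mQ$ under which index futures and options are martingales, Corollary~\ref{cor:VMM:diffusion} applies and rewrites them as
\[
  w^{(i)}_t \;=\; -\fr{N}{2\tau^*\VIX_t^2}\,\fr{\pr}{\pr x_i}\!\left(\Ex^\mQ\!\left[\ln(F_{t+\tau^*}) \mid \vX_t = \vx\right] - \ln(F_t)\right).
\]
Second, by Lemma~\ref{cor:3on2:VIXasLog} and the Markov property, $\VIX_t^2 = h(t,\vx)$ with $h(t,\vx) = -\tfrac{N}{\tau^*}\bigl(\Ex^\mQ[\ln(F_{t+\tau^*}) \mid \vX_t = \vx] - \ln(F_t)\bigr)$, so differentiating this representation in $x_i$ identifies the bracketed $x_i$-derivative in the previous display with $-\tfrac{\tau^*}{N}\,\fr{\pr h}{\pr x_i}(t,\vx)$. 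Third, substituting back and cancelling $N$, $\tau^*$ and a factor $2\VIX_t^2 = 2h(t,\vx)$ gives the identity.

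I expect the only genuine difficulty to be technical rather than conceptual. Two points need care: differentiating $\int_0^\infty k^{-2}\,\Theta(k,\tau^*,t,\vx)\,m(dk)$ under the integral sign with respect to $x_i$, which is licensed by the integrability hypotheses carried over from Proposition~\ref{prop:VFM:Vhat2} through the dominated convergence theorem, exactly as in the proof of Corollary~\ref{cor:VMM:diffusion}; and the representation of $\Ex^\mQ[\ln(F_{t+\tau^*}) \mid \filt{t}]$ as a deterministic function of $\vX_t$, which presupposes that $\vX$ is Markov under $\mQ$ and is already implicit in the functional-dependence assumption preceding the theorem and in the appeals to the Markov property in Lemma~\ref{cor:3on2:VIXasLog} and Corollary~\ref{cor:VMM:diffusion}. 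Past those regularity points the argument is essentially a one-line substitution, and the content of the theorem lies in the reformulation itself: the no-arbitrage restriction on the VIX-futures diffusion is equivalent to a first-order PDE constraint linking the index volatility $\sigma$ to the short-end VIX-futures volatility $\nu(t,t)$ through the gradient of the known function $h = \VIX^2$.
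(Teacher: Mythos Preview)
Your proposal is correct and follows essentially the same route as the paper's own proof: start from the diffusion restriction \eqref{eqn:VFM:CC3}, invoke Corollary~\ref{cor:VMM:diffusion} and Lemma~\ref{cor:3on2:VIXasLog} with the Markov property to identify the option-integral term with $\tfrac{1}{2h}\,\partial_{x_i}h$, substitute the functional forms for $\sigma$ and $\nu$, and multiply through by $2h$. Your treatment is in fact slightly more explicit about the technical prerequisites (dominated convergence for differentiation under the integral, the Markov assumption, and the use of Condition~\ref{defn:VFM:C1} to equate $\iVIX_t$ with $\VIX_t$) than the paper's own argument.
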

\section{Application - Proportional volatility}
\label{sect:VFM:Applications}
In this section, an application of the main result is provided.  A class of model for the term structure of VIX futures is considered and Theorem~\ref{thm:VFM:genPDE} is used to derive the implications of the model assumptions on the dynamics of the underlying index, such that the joint market remains free from arbitrage.  The example demonstrates that there are unavoidable complexities involved when modelling the joint dynamics of the underlying index and VIX futures, and that care must be taken to avoid arbitrage.  %
The model is a special case of models that satisfy the restrictions derived in Section~\ref{sect:VFM:Consistency}, except that the risk-neutral measure $\mQ$ is assumed to be fixed and the dynamics are directly specified under such a measure.

The simplest non-negative class of model for the term structure of VIX futures is geometric Brownian motion.  The influence of randomness is to shift the entire futures term structure up or down in a multiplicative fashion and options are priced through a straight-forward application of Black's formula.  The example provides a good illustration of the implications of the modelling approach, as well as a first step in an analysis of VIX implied volatilities, since these are calculated using Black's formula.

In what follows, it is assumed that the underlying index is driven by a one-factor stochastic-volatility model.  More precisely, let $\mu:[0,\infty)\to\mathbb R$ and $\sigma:[0,\infty)\to\mathbb R$ be two functions. We assume that the dynamics of the index process $\left(F_t\right)$ is defined by a system of equations
 \begin{equation}\label{eqn:VFM:indexdiff2}
 \left\{\begin{aligned}
        \fr{dF_t}{F_t}&=\sqrt{X_t}\,dW_t\\
        dX_t &= \mu(X_t)\,dt+\sigma(X_t)\,dZ_t\,,
    \end{aligned}\right.
    \end{equation}
where $W$ and $Z$ are correlated $\mathbb{F}$-adapted Brownian motions under the measure $\mQ$.
We assume that  for every $X_0=x\ge 0$ and $F_0=f\ge 0$ there exists a unique strong solution $\left(F_t,X_t\right)$ to the system \eqref{eqn:VFM:indexdiff2} such that  $\mathbb P\left(X_t>0\right)=1$, for all $t\ge0$,
$\mQ$-a.s. In particular,
\[\int_0^t\left|\mu\left(X_s\right)\right|\,ds+\int_0^t\left|\sigma\left(X_s\right)\right|^2\,ds<\infty\quad\mQ-a.s.\]
\begin{cor}
%
%**** THE IDEA HERE IS OK, HOWEVER, THERE ARE ERRORS IN THE PROOF SO I AM NOT SURE WHAT THE END RESULT WILL LOOK LIKE.
\label{cor:VFM:propVol}
Suppose that VIX futures satisfy the family of equations
    \begin{equation}
        {d\vF(t,T)}=\beta(t,T){\vF(t,T)}\,dZ_t,\quad\forall 0\le t< T\le T^*, \label{eqn:VFM:propVol}
    \end{equation}
for some $\beta: U\to \Rp$ where $U=\{(t,T)\,|\,0\le t<T\le T^*\}$, and that the dynamics of the underlying index are given by the stochastic volatility model \eqref{eqn:VFM:indexdiff2}.

    Then, for there to be no arbitrage, $\beta(t,t)$, $\mu(x)$ and $\sigma(x)$ must satisfy
    \begin{equation*}
        \beta(t,t) \equiv \gamma
    \end{equation*}
and
    \begin{align*}
        \mu(x) &= \sigma(x)\left[\fr{1}{2}\fr{\pr \sigma}{\pr x}(x)-\gamma\right],
    \end{align*}
    for some $\gamma\in\Rp$ and for all $t\ge 0$ and $x\ge0$.

 \begin{proof}
%

    % For fixed $t$, define $u:[t,t+\tau^*]\times\Rp\to\Rp$ by $u(s,x;t) =\Ex^\mQ (X_s|X_t=x)$.
    By Corollary~\ref{cor:3on2:VIXasLog}, the VIX satisfies
    \begin{equation*}% \label{eqdefVIX}
        \VIX_t = \sqrt{-\fr{N}{\tau^*}\ExQ{\ln \left( \frac{F_{t+\tau^*}}{F_t} \right)}}.
    \end{equation*}
     The dynamics specified in Equation~\eqref{eqn:VFM:indexdiff2} further imply that%
    \begin{align*}% \label{eqdefVIX}
        \VIX_t &= \sqrt{-\fr{N}{\tau^*}\ExQ{\int_t^{t+\tau^*}\fr{1}{F_{s}}\,dF_s-\int_{t}^{t+\tau^*}\fr{1}{2F_{s}^2}\,d\qv{F}{F}_s}}\\
             &= \sqrt{\fr{N}{2\tau^*}\ExQ{\int_{t}^{t+\tau^*}X_s\,ds}}\\
             &= \sqrt{\fr{N}{2\tau^*}\Ex^\mQ\left[\int_t^{t+\tau^*}X_s\,ds\bigg|X_t=x\right]}=: \sqrt{h(x)},
    \end{align*}
    for some function $h:\Rp\to\Rp$.
    %Therefore,
%    \begin{align*}
%        h(t,x) &= \Ex^\mQ\left[\int_t^{t+\tau^*}X_s\,ds\bigg|X_t=x\right].
%    \end{align*}
%%
%
    The function $h(x)$ is independent of $t$ due to the Markovian structure of $X_t$, which can be observed by introducing a simple change of variable,
    \begin{align*}
            h(x) &=  \fr{N}{2\tau^*}\Ex^\mQ\left[\int_t^{t+\tau^*}X_s\,ds\bigg|X_t=x\right] \\
                & = \fr{N}{2\tau^*}\Ex^\mQ\left[\int_t^{t+\tau^*}X_{s-t}\,ds\bigg|X_0=x\right] \\
                & = \fr{N}{2\tau^*}\Ex^\mQ\left[\int_0^{\tau^*}X_{r}\,dr\bigg|X_0=x\right].
    \end{align*}
    To proceed, introduce the function
    \begin{equation}
        H(t,x):=  {\fr{N}{2\tau^*}}\Ex^\mQ\left[\int_t^{\tau^*}X_{r}\,dr\bigg|X_0=x\right],\,0\le t\le \tau^*,\label{eqn:VFM:H}
    \end{equation}
    such that $H(0,x)\equiv h(x)$ and the function $H(t,x)$ is the unique solution to the Cauchy problem (see Theorem~7.6 in \cite{KaratzasShreve00})%is the unique solution to the
     \begin{equation*}
        \left\{
        \begin{array}{l}
        \ds{\fr{\pr H}{\pr t}(t,x)+\ds{\mu(x)\fr{\pr H}{\pr x}(t,x)+\fr{1}{2}\sigma^2(x)\fr{\pr ^2H}{\pr x^2}(t,x)+x= 0}},\quad 0\le t <\tau^*,\\
        H(\tau^*,x) = 0.
        \end{array}
        \right.
    \end{equation*}
    Differentiating Equation~\eqref{eqn:VFM:H} and taking the limit $t\searrow 0$ implies that $\fr{\pr H}{\pr t}(0^+,x)=-x$.  Evaluating the above equation at $t=0$ implies that $h(x)$ satisfies
    \begin{equation}
        \ds{\mu(x)\fr{\pr h}{\pr x}(x)+\fr{1}{2}\sigma^2(x)\fr{\pr ^2h}{\pr x^2}(x)= 0}.\label{eqn:VFM:app:FKv}
    \end{equation}
%
%    where $r:t\mapsto [t,t+\tau^*]$ is an unknown function.
 % Let $\Ex^*(\cdot)=\Ex^\mQ(\cdot|X_t=x)$.  ****TRYING TO PROVE SOME FORM OF TIME HOMOGENEOUS FORM HERE. IM PRETTY SURE IT HOLDS BUT I NEED TO TOY AROUND MORE.***
%%
Theorem~\ref{thm:VFM:genPDE} implies that $\sigma(x)$ and $h(x)$ must jointly satisfy
    \begin{align}
        \sigma(x)\fr{\pr h}{\pr x}(x)= 2\beta(t,t)\,h(x),\label{eqn:VFM:propPDE}
    \end{align}
     for all $t\ge 0$ and $x\ge0$.  Using the fact that $h(x)$ is constant in time implies that $\beta(t,t)\equiv\gamma$, for some $\gamma>0$.
Differentiating Equation~\eqref{eqn:VFM:propPDE} with respect to $x$ yields
    \begin{align*}
        \fr{\pr\sigma}{\pr x}(x)\fr{\pr h}{\pr x}(x)+ \sigma(x)\fr{\pr^2 h}{\pr x^2}(x)&= 2\gamma\fr{\pr h}{\pr x}(x)
    \end{align*}
    and hence
    \begin{align}
      \fr{1}{2}\sigma^2(x)\fr{\pr^2 h}{\pr x^2}(x)&=\fr{1}{2}\,{\sigma(x)}\,\fr{\pr h}{\pr x}(x)\left[2\gamma -\fr{\pr\sigma}{\pr x}(x)\right]\nonumber\\
        &=\gamma\,h(x)\left[2\gamma-\fr{\pr\sigma}{\pr x}(x)\right].
        \label{eqn:VFM:App:propd2Udx2}
    \end{align}
Equations \eqref{eqn:VFM:app:FKv}-\eqref{eqn:VFM:App:propd2Udx2} imply that
\begin{equation}
    \ds{2\gamma\,h(x)\left[ \fr{\mu(x)}{\sigma(x)}+\gamma-\fr{1}{2}\fr{\pr\sigma}{\pr x}(x)\right]= 0}\label{eqn:VFM:app:FKuAfterSub}
\end{equation}
and hence
\begin{equation*}
     \fr{\mu(x)}{\sigma(x)}+\gamma-\fr{1}{2}\fr{\pr\sigma}{\pr x}(x)= 0.
\end{equation*}
Solving for $\mu(x)$ completes the proof.
    \end{proof}
\end{cor}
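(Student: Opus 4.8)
The plan is to compute $\VIX_t^2$ explicitly for the stochastic-volatility model \eqref{eqn:VFM:indexdiff2}, exhibit it as a time-homogeneous function of the variance level $X_t$, and then extract two constraints on $(\mu,\sigma,\beta)$: one from a Feynman--Kac representation of that function and one from the diffusion-consistency identity of Theorem~\ref{thm:VFM:genPDE}.

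First I would use Lemma~\ref{cor:3on2:VIXasLog} to write $\VIX_t^2=-\tfrac{N}{\tau^*}\ExQ{\ln(F_{t+\tau^*}/F_t)}$ and apply It\^o's formula to $\ln F$ under \eqref{eqn:VFM:indexdiff2}: since $dF_s/F_s=\sqrt{X_s}\,dW_s$ yields $d\qv{F}{F}_s=F_s^2X_s\,ds$, the logarithm equals $\int_t^{t+\tau^*}\sqrt{X_s}\,dW_s-\tfrac12\int_t^{t+\tau^*}X_s\,ds$, and under the standing integrability hypotheses the stochastic integral is a true $\mQ$-martingale, so $\ExQ{\cdot}$ kills it and leaves $\VIX_t^2=\tfrac{N}{2\tau^*}\ExQ{\int_t^{t+\tau^*}X_s\,ds}$. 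Because $X$ is an autonomous, hence Markov and time-homogeneous, diffusion, a shift of the integration variable gives $\VIX_t^2=h(X_t)$ with $h(x):=\tfrac{N}{2\tau^*}\Ex^\mQ[\int_0^{\tau^*}X_r\,dr\,|\,X_0=x]$, a function of $x$ alone. I would then introduce $H(t,x):=\tfrac{N}{2\tau^*}\Ex^\mQ[\int_t^{\tau^*}X_r\,dr\,|\,X_0=x]$, observe $H(0,\cdot)=h$ and $H(\tau^*,\cdot)=0$, identify $H$ via a Feynman--Kac / Cauchy-problem argument (in the spirit of Theorem~7.6 of \cite{KaratzasShreve00}) as the solution of $\partial_tH+\mu(x)\partial_xH+\tfrac12\sigma^2(x)\partial_{xx}H+x=0$, and evaluate at $t=0$ using $\partial_tH(0^+,x)=-x$ to obtain the first constraint
\[
\mu(x)\,h'(x)+\tfrac12\,\sigma^2(x)\,h''(x)=0.
\]

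For the second constraint I would apply Theorem~\ref{thm:VFM:genPDE} with $m$ the scaled Lebesgue measure of Definition~\ref{def:Intro:CS} and $V_t^2=h(X_t)$; since the model is written directly under $\mQ$, the drift conditions \eqref{eqn:VFM:CC1}--\eqref{eqn:VFM:CC2} hold with $\lambda\equiv 0$, and Condition (C1)---that the two representations of $\VIX$ agree---then yields $\sigma(x)\,h'(x)=2\beta(t,t)\,h(x)$. As $h>0$ is independent of $t$ while the left-hand side does not involve $t$, this forces $\beta(t,t)\equiv\gamma$ for some $\gamma\in\Rp$. Differentiating $\sigma h'=2\gamma h$ in $x$, using it to rewrite $h'$ and $\tfrac12\sigma^2h''$ in terms of $h$, substituting into the first constraint, and cancelling the nonvanishing factor $2\gamma h(x)$ leaves $\mu(x)/\sigma(x)+\gamma-\tfrac12\sigma'(x)=0$, i.e.\ $\mu(x)=\sigma(x)\bigl[\tfrac12\sigma'(x)-\gamma\bigr]$. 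The step I expect to be most delicate is the Feynman--Kac identification of $H$: one needs $h\in\mathcal C^2$ with enough integrability for the Cauchy problem to hold classically and for the interchange of $\partial_x$ with the expectation, and one must match the correlated pair $(W,Z)$ in \eqref{eqn:VFM:indexdiff2} to the orthogonal noise of the general framework so that the operative component $j$ of the vector identity \eqref{eqn:VFM:genPDE} collapses to $\sigma h'=2\beta(t,t)h$; the rest is routine one-variable calculus on the two constraints.
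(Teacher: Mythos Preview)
Your proposal is correct and follows essentially the same route as the paper's own proof: express $\VIX_t^2$ via Lemma~\ref{cor:3on2:VIXasLog} and It\^o on $\ln F$ as $h(X_t)=\tfrac{N}{2\tau^*}\Ex^\mQ[\int_0^{\tau^*}X_r\,dr\mid X_0=x]$, obtain $\mu h'+\tfrac12\sigma^2h''=0$ from the Feynman--Kac Cauchy problem for $H$, derive $\sigma h'=2\beta(t,t)h$ from Theorem~\ref{thm:VFM:genPDE}, infer $\beta(t,t)\equiv\gamma$ from time-independence of $h$, and combine the two constraints after one differentiation. Your flagged delicate points (regularity for Feynman--Kac and the reduction of the vector identity \eqref{eqn:VFM:genPDE} to the scalar $Z$-component) are exactly the places where the paper relies on standing assumptions rather than explicit verification.
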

In the following corollary and example, it is shown that the assumptions of Corollary \ref{cor:VFM:propVol} are satisfied for a nontrivial class of volatilities $\sigma$.
\begin{cor}
Assume that $\sigma(x)=x\sigma_0(x)$, where $\sigma_0:\mathbb R\to\mathbb R$ is continuously differentiable and
\[\sup_{x\in\mathbb R}\left|\sigma_0(x)\right|+\left|\sigma^\prime(x)\right|<\infty,\quad x\in\mathbb R\,.\]
 Then, for every $X_0>0$, there exists a unique global non-negative solution of the equation
 \begin{equation}\label{eq_X}
            dX_t = \sigma(X_t)\left(\fr{1}{2}\sigma^\prime(X_t)-\gamma\right)dt+\sigma(X_t)\,dZ_t.
    \end{equation}
    Moreover, the process $F$ defined in Equation \eqref{eqn:VFM:indexdiff2} is positive martingale for every $F_0>0$.
    \end{cor}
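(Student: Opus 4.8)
The plan is to derive both assertions from a Lamperti-type change of variables, exploiting that the drift in Equation~\eqref{eq_X} is exactly the one Corollary~\ref{cor:VFM:propVol} produces so as to linearise the transformed equation.

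\emph{Existence, uniqueness and positivity of $X$.} Since $\sigma_0\in C^1$ and $\sigma(x)=x\sigma_0(x)>0$ for $x>0$, put $\phi(x):=\int_1^x\frac{dy}{\sigma(y)}=\int_1^x\frac{dy}{y\sigma_0(y)}$, $x>0$. First I would check that $\phi$ is a strictly increasing $C^2$ bijection of $(0,\infty)$ onto $\R$: near $0$ the integrand behaves like $(y\sigma_0(0))^{-1}$, so $\phi(0^+)=-\infty$, while boundedness of $\sigma_0$ forces $\phi(x)\ge\|\sigma_0\|_\infty^{-1}\ln x\to+\infty$. Itô's formula applied to $\phi$, using $\phi'=1/\sigma$, $\phi''=-\sigma'/\sigma^2$ and the drift $\sigma(x)(\tfrac12\sigma'(x)-\gamma)$ of Equation~\eqref{eq_X}, shows that the drift of $\phi(X_t)$ collapses to the constant $-\gamma$; since a non-negative solution cannot reach $0$ (that would send $\phi(X)$ to $-\infty$ in finite time, impossible for a Brownian motion with drift), any such solution satisfies $\phi(X_t)=\phi(X_0)-\gamma t+Z_t$, that is $X_t=\phi^{-1}(\phi(X_0)-\gamma t+Z_t)$. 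Conversely this formula defines, for every $X_0>0$, a process with values in $(0,\infty)$ for all $t\ge0$ — global existence and strict positivity — and Itô's formula applied to $\phi^{-1}$ verifies that it solves Equation~\eqref{eq_X}; pathwise uniqueness is then immediate from the representation together with uniqueness of the drifted Brownian motion. This settles the first assertion.

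\emph{The martingale property of $F$.} Writing $F_t=F_0\,\mathcal E\bigl(\int_0^\cdot\sqrt{X_s}\,dW_s\bigr)_t$, $F$ is a strictly positive local martingale, hence a supermartingale, and it is enough to show $\Ex^{\mQ}[F_t]=F_0$ for each $t$. I would use the standard localisation: with $\tau_n:=\inf\{t:X_t\ge n\}$ the bracket $\int_0^{\cdot\wedge\tau_n}X_s\,ds$ is bounded, so $(F_{t\wedge\tau_n})_t$ is a genuine martingale and defines $\mQ_n$ on $\filt{t}$ by $d\mQ_n/d\mQ=F_{t\wedge\tau_n}/F_0$; then $\Ex^{\mQ}[F_t]=F_0(1-\lim_n\mQ_n(\tau_n\le t))$, using that $X$ does not explode under $\mQ$ so that $\tau_n\uparrow\infty$. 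By Girsanov, under $\mQ_n$ the process $X$ satisfies, up to $\tau_n$,
\[
  dX_t=\Bigl(\sigma(X_t)\bigl(\tfrac12\sigma'(X_t)-\gamma\bigr)+\rho\,\sqrt{X_t}\,\sigma(X_t)\Bigr)dt+\sigma(X_t)\,d\widetilde Z_t ,
\]
with $\rho$ the correlation of $W$ and $Z$; hence $F$ is a true martingale precisely when this tilted diffusion does not reach $+\infty$ in finite time. I would check non-explosion by Feller's test for explosions: since $2\widetilde b(x)/\sigma(x)^2=\bigl(1/x+\sigma_0'(x)/\sigma_0(x)\bigr)-2\gamma/(x\sigma_0(x))+2\rho/(\sqrt x\,\sigma_0(x))$, the scale density is
\[
  p'(\xi)=\mathrm{const}\cdot\frac{1}{\xi\,\sigma_0(\xi)}\,\exp\!\Bigl(2\gamma\!\int_1^\xi\!\frac{dy}{y\sigma_0(y)}-2\rho\!\int_1^\xi\!\frac{dy}{\sqrt y\,\sigma_0(y)}\Bigr).
\]
Boundedness of $\sigma_0$ gives $\int_1^\xi(y\sigma_0(y))^{-1}\,dy\ge c\ln\xi$, so $p(+\infty)=\infty$ — whence no explosion — whenever the $\rho$-term does not overwhelm this growth, in particular for $\rho\le0$ (the leverage regime); for $\rho>0$ one checks instead that $v(+\infty)=\infty$, the factor $\exp(-2\rho\!\int_1^\xi(\sqrt y\,\sigma_0(y))^{-1}\,dy)$ in $p'$ cancelling the opposite-sign factor appearing in $1/(p'\sigma^2)$ so that the integrand defining $v$ grows only polynomially (this step may require additional decay of $\sigma_0$ at $+\infty$). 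Feller's test then yields non-explosion of the tilted diffusion, so $F$ is a true martingale.

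\emph{Main obstacle.} The first assertion is essentially free once the Lamperti map is in place — precisely because the drift in Equation~\eqref{eq_X} was designed (in Corollary~\ref{cor:VFM:propVol}) to linearise the transformed equation. The genuine difficulty is the martingale property: a strictly positive local martingale is only a supermartingale, and excluding a loss of mass at infinity is equivalent to the non-explosion of the Girsanov-tilted variance diffusion $\widetilde X$, whose extra drift term $\rho\,\sqrt{\widetilde X}\,\sigma(\widetilde X)=\rho\,\widetilde X^{3/2}\sigma_0(\widetilde X)$ is superlinear. Running Feller's test on $\widetilde X$, and in particular controlling the interplay between the sign of the correlation $\rho$, the mean-reversion rate $\gamma$, and the behaviour of $\sigma_0$ at infinity, is where the work lies.
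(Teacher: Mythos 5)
Your treatment of the first assertion is correct but follows a genuinely different route from the paper's. The paper simply notes that the coefficients of Equation~\eqref{eq_X} are of the multiplicative form $X\cdot(\mbox{bounded})$, invokes standard existence and uniqueness, and reads off non-negativity from the stochastic-exponential representation
\[X_t=X_0\exp\left(\int_0^t\left(\sigma_0(X_s)\left(\tfrac{1}{2}\sigma^\prime(X_s)-\gamma\right)-\tfrac{1}{2}\sigma_0^2(X_s)\right)ds+\int_0^t\sigma_0(X_s)\,dZ_s\right).\]
Your Lamperti map buys more: an explicit closed-form solution $X_t=\phi^{-1}(\phi(X_0)-\gamma t+Z_t)$ and pathwise uniqueness for free. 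But it costs an extra hypothesis --- you need $\sigma_0>0$ on $(0,\infty)$ for $\phi$ to be defined, increasing and surjective onto $\R$, and this is not among the stated assumptions (the paper's route does not need it). If you keep this approach, add that hypothesis explicitly.

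The real divergence is in the martingale claim, and there your proposal is more careful than the paper. The paper's entire argument is that $\Ex^\mQ[\int_0^TX_s\,ds]<\infty$ (true, since $\sigma_0$ is bounded) ``hence $F$ is a square-integrable martingale''; that implication is a non sequitur, since integrability of $\int_0^T X_s\,ds$ controls neither $\Ex^\mQ[F_T^2]$ nor even the true-martingale property of $\mathcal{E}(\int\sqrt{X}\,dW)$ when $W$ and $Z$ are correlated. Your localisation--Girsanov--Feller scheme is the right machinery, and your inability to close the case $\rho>0$ is not a defect of your write-up: take $\sigma_0\equiv\xi>0$ constant (which satisfies all stated hypotheses), so that $X$ is a geometric Brownian motion; the tilted diffusion is $d\widetilde X=[\xi\widetilde X(\tfrac{\xi}{2}-\gamma)+\rho\,\xi\widetilde X^{3/2}]\,dt+\xi\widetilde X\,d\widetilde Z$, for which $p^\prime(y)\propto y^{2\gamma/\xi-1}e^{-(4\rho/\xi)\sqrt{y}}$ gives $p(+\infty)<\infty$ and the integrand defining $v$ decays like $y^{-3/2}$, so $v(+\infty)<\infty$ as well. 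By Feller's test the tilted process explodes, and $F$ is a strict local martingale whenever $\rho>0$; this is the classical Hull--White/Sin phenomenon, and it means the corollary is false as stated without a sign restriction on the correlation or additional decay of $\sigma_0$ at infinity. The only step to tighten on your side is the $\rho\le0$ case, which you can make unconditional: boundedness of $\sigma_0$ and $\gamma>0$ give $p^\prime(\xi)\ge c\,\xi^{2\gamma/\supnorm{\sigma_0}-1}$, hence $p(+\infty)=\infty$, which already forces $v(+\infty)=\infty$ and non-explosion.
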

       \begin{proof}
       The assumptions of the corollary yield the existence and uniqueness of solutions to the Equation~\eqref{eq_X} for every $x\in\mathbb R$. Moreover,
\[X_t=\exp\left(\int_0^t\left(\sigma_0\left(X_s\right)\left(\frac{1}{2}\sigma^\prime\left(X_s\right)-\gamma-\frac{1}{2}\sigma_0^2\left(X_s\right)\right)\right)ds +\int_0^t\sigma_0\left(X_s\right)dZ_s\right).\]
It is easy to check that
        \[\mathbb E\left(\int_0^TX_s\,ds\right)<\infty\,,\]
        hence $\left(F_t\right)$ is a square-integrable martingale.
            \end{proof}
            \par\medskip\noindent
            \emph{Example. }
            For a certain $\alpha>0$, consider Equation \eqref{eq_X} with $\sigma(x)=\alpha\sqrt{x}$, for $x\ge 0$. Then the equation takes the form
         \begin{equation*}
            dX_t = \fr{\alpha^2}{4}dt+\alpha\,\sqrt{X_t}\left(-\gamma dt+dZ_t\right).
    \end{equation*}
Recall that the equation
\[dY_t=\frac{\alpha^2}{4}dt+\alpha\sqrt{Y_t}dZ_t,\quad Y_0>0,\]
has unique strong solution that is instantaneously reflected at zero, hence $\mathbb P\left(Y_t\ge 0\right)=1$ for every $t>0$.
Using equivalent change of measure, Equation \eqref{eq_X} has a unique weak solution $X$, such that $\mathbb P\left(X_t\ge 0\right)=1$ for every $t>0$. Since $\gamma>0$ and
\[\mathbb E(X_t)+\alpha\gamma\int_0^t\mathbb E(\sqrt{X_s})ds=\mathbb E(X_0)+\frac{\alpha^2}{4}t\,\]
the martingale property of $\left(F_t\right)$ follows.

%\subsection{Connection to Existing Models - \cite{CarrSu07}}

In Corollary~\ref{cor:VFM:propVol}, the plausible dynamics for the instantaneous variance process are derived from the associated dynamics for the term-structure of VIX futures. A similar analysis was performed in \cite{CarrSu07} in the context of variance swaps.  The authors propose a rather general framework in which the underlying index and a single variance swap are modelled.  Plausible risk-neutral dynamics for the instantaneous variance process are derived so that the dynamics of the underlying are consistent with that of the variance swap. Their restrictions are different to those obtained here, however, due to the fact that the object of concern is a variance swap, not the VIX.  The authors examine the limiting case as the variance swap approaches expiry and, together with PDE arguments, obtain interesting restrictions from a rather general setup.  The analogous arguments presented in Corollary~\ref{cor:VFM:propVol} for the VIX also make use of PDE arguments, however, where limiting arguments were used for variance swaps, time-invariant arguments were used, since the VIX is defined over a fixed time horizon.
\section{Conclusion}
In this paper, a new modelling approach that prescribes dynamics to the term structure of VIX futures was proposed.  The main contributions were Theorem~\ref{thm:VFM:EMM1} and Theorem~\ref{thm:VFM:EMM2}, which stated necessary conditions for there to be no arbitrage between the joint market of VIX and equity derivatives. Not surprisingly, the restrictions are rather complex, as a consequence of the complexities involved in the definition of the VIX.  An application of the main result was provided, which demonstrates that when modelling VIX futures directly, the drift and diffusion of the corresponding stochastic volatility model must be restricted to preclude arbitrage.  This is similar to the well-known drift restrictions in interest-rate modelling.

There are several directions in which the research could be extended. The analysis performed in this paper provides a platform for the analysis of the existing literature. For models that directly prescribe dynamics to the VIX, the affect of assuming a specific form for the VIX drift and diffusion coefficients can be assessed.  Moreover, the newly developed framework provides a starting point for the analysis of VIX option implied volatilities, which are a fundamental quantity for both academics and practitioners.

\appendix
%\numberwithin{thm}{paper}
\section{Proof of Proposition~\ref{prop:VFM:Vhat2}}
\label{app:proof:propVhat2}
\begin{proof}
For $0\le t<T\le T^*$ and fixed $f>0$, let
\begin{equation*}
    G(f,T,t,\vX_t) = \int_0^f\fr{1}{k^2}\hat P(k,T,t,\vX_t)\,m(dk)+\int_f^\infty\fr{1}{k^2}\hat C(k,T,t,\vX_t)\,m(dk)
\end{equation*}
and write $G_t(f):=G(f,T,t,\vX_t)$.  Then
\begin{equation}
    \hat\VIX^2_t(T) = \fr{1}{T-t}G_t(F_t).
\end{equation}
By the assumption that $G_t(\cdot)$ is twice differentiable,
\begin{align*}
    \fr{\pr G_t}{\pr f}(f) &=\fr{\hat P(f,T,t,\vX_t)}{f^2}-\fr{\hat C(f,T,t,\vX_t)}{f^2}
\end{align*}
and
\begin{align*}
 \fr{\pr^2G_t}{\pr f^2}(f) &=\fr{\pr}{\pr f}\left(\fr{\hat P(f,T,t,\vX_t)}{f^2}-\fr{\hat C(f,T,t,\vX_t)}{f^2}\right).
\end{align*}
Then, by Ito's Lemma and stochastic Fubini's theorem,
\begin{align*}
    G_t(f) ={}& G_0(f)+\int_0^f\fr{1}{k^2}\int_0^t d\hat P(k,T,s,\vX_s)\,m(dk)+\int_f^\infty\fr{1}{k^2}\int_0^td\hat C(k,T,s,\vX_s)\,m(dk)\\
                ={}&G_0(f)+ \int_0^f\int_0^t \fr{1}{k^2}\left(\fr{\pr\hat P}{\pr t}(k,T,s,\vX_s)\,ds+\sum_{i=0}^d\fr{\pr\hat P}{\pr x_i}(k,T,s,\vX_s)\,dX^i_s\right.\\
                {}&\left.+\fr{1}{2}\sum_{i,j=0}^d\fr{\pr^2\hat P}{\pr x_i\pr x_j}(k,T,s,\vX_s)\,d\qv{X^j}{X^i}_s\right)\,m(dk)
                +\int_f^\infty\int_0^t \fr{1}{k^2}\left(\fr{\pr\hat C}{\pr t}(k,T,s,\vX_s)\,ds\right.\\{}&\left.+\sum_{i=0}^d\fr{\pr\hat C}{\pr x_i}(k,T,s,\vX_s)\,dX^i_s+\fr{1}{2}\sum_{i,j=0}^d\fr{\pr^2\hat C}{\pr x_i\pr x_j}(k,T,s,\vX_s)\,d\qv{X^j}{X^i}_s\right)\,m(dk)\\
                ={}&G_0(f)+ \int_0^t\int_0^f \fr{1}{k^2}\fr{\pr\hat P}{\pr t}(k,T,s,\vX_s)\,m(dk)\,ds+\sum_{i=0}^d\int_0^t\int_0^f\fr{1}{k^2}\fr{\pr\hat P}{\pr x_i}(k,T,s,\vX_s)\,m(dk)\,dX^i_s\\
                {}&+\fr{1}{2}\sum_{i,j=0}^d\int_0^t\int_0^f\fr{1}{k^2}\fr{\pr^2\hat P}{\pr x_i\pr x_j}(k,T,s,\vX_s)\,m(dk)\,d\qv{X^j}{X^i}_s\\
                {}&+\int_0^t\int_f^\infty\fr{1}{k^2}\fr{\pr\hat C}{\pr t}(k,T,s,\vX_s)\,m(dk)\,ds+\sum_{i=0}^d\int_0^t\int_f^\infty\fr{1}{k^2}\fr{\pr\hat C}{\pr x_i}(k,T,s,\vX_s)\,m(dk)\,dX^i_s\\
                {}&+\fr{1}{2}\sum_{i,j=0}^d\int_0^t\int_f^\infty\fr{1}{k^2}\fr{\pr^2\hat C}{\pr x_i\pr x_j}(k,T,s,\vX_s)\,m(dk)\,d\qv{X^j}{X^i}_s
\end{align*}
and
\begin{equation}
    \hat\VIX^2_t(T)=\hat\VIX^2_0(T)-\int_0^t\fr{1}{T-s}\hat\VIX^2_s(T)\,ds+\int_0^t\fr{1}{T-s}dG_s(F_s).\label{eqn:DR:VIXinG}
\end{equation}
By the Ito-Ventzel formula (Lemma~\ref{lem:VIXSDE:ItoV}),
\begin{align*}
     G_t(F_t) %={}&d\left[\int_{F_t}^\infty\fr{\hat{C}_t(k,T ;F_t)}{k^2}m(dk)
                %+\int_0^{F_t}\fr{\hat{P}(k,T ;F_t)}{k^2}m(dk)\right] \\
            ={}&G_0(F_0)+\int_0^t\int_0^{F_s}\fr{d\hat{P}(k,T,s,\vX_s)}{k^2}\,m(dk) \\{}&+\int_0^t\int_{F_s}^\infty\fr{d\hat{C}(k,T,s,\vX_s)}{k^2}\,m(dk)
+\int_0^t\fr{\pr G_s}{\pr f}(F_s)\,dF_s\\
            {}&+\sum_{j=0}^d\int_0^t\fr{1}{F_s^2}\left(\fr{\pr\hat{P}}{\pr x_j}(F_s,T,s,\vX_s)-\fr{\pr\hat{C}}{\pr x_j}(F_s,T,s,\vX_s)\right) d\qv{X^j}{F}_s\\
            {}& +\fr{1}{2}\int_0^t\fr{\pr^2 G_s}{\pr f^2}(F_s)\,d\qv{F}{F}_s.
\end{align*}
Therefore
\begin{align*}
     G_t(F_t) %={}&d\left[\int_{F_t}^\infty\fr{\hat{C}_t(k,T ;F_t)}{k^2}m(dk)
                %+\int_0^{F_t}\fr{\hat{P}(k,T ;F_t)}{k^2}m(dk)\right] \\
            ={}&G_0(F_0)+\int_0^t\int_0^{F_s}\fr{d\hat{P}(k,T,s,\vX_s)}{k^2}m(dk) +\int_0^t\int_{F_s}^\infty\fr{d\hat{C}(k,T,s,\vX_s)}{k^2}m(dk)  \\
            {}&+\int_0^t\fr{1}{F_t^2}\left(\hat{P}(F_s,T,s,\vX_s)- \hat{C}(F_s,T,s,\vX_s)\right)\,dF_s\\
            {}&+\sum_{j=0}^d\int_0^t\fr{1}{F_t^2}\left(\fr{\pr\hat{P}}{\pr x_j}(F_s,T,s,\vX_s)-\fr{\pr\hat{C}}{\pr x_j}(F_s,T,s,\vX_s)\right)\,d\qv{X^j}{F}_s\\
            {}&+\fr{1}{2}\int_0^t\left.\fr{\pr}{\pr f}\left(\fr{\hat{P}(f,T,s,\vX_s)-\hat{C}(f,T,s,\vX_s)}{f^2}\right)\right|_{f=F_s}d\qv{F}{F}_s.
\end{align*}
To simplify further, observe that
\begin{equation*}
     {\hat{P}(F_t,T,t,\vX_t)}-{\hat{C}_t(F_t,T,t,\vX_t)} = 0,
\end{equation*}
\begin{align*}
    \sum_{j=0}^d\fr{1}{F_t^2}\left(\fr{\pr\hat{P}}{\pr x_j}(F_t,T,t,\vX_t)\right.&\left.-\fr{\pr\hat{C}}{\pr x_j}(F_t,T,t,\vX_t)\right) d\qv{X^j}{F}_t\\
    &=\fr{1}{F_t^2}\left.\fr{\pr}{\pr x_0}\left(F_t-x_0\right)\right|_{x_0=F_t} d\qv{F}{F}_t\\
         &=-\fr{1}{F_t^2}d\qv{F}{F}_t,
\end{align*}
and
\begin{align*}
    \left.\fr{\pr}{\pr f}\left(\fr{\hat P(f,T,t,\vX_t)-\hat C(f,T,t,\vX_t)}{f^2}\right)\right|_{f=F_t}
        ={}& -\fr{2}{F_t^3}\left.\left(\hat P(f,T,t,\vX_t)-\hat C(f,T,t,\vX_t)\right)\right|_{f=F_t}\\
        {}&+\fr{1}{F_t^2}\fr{\pr}{\pr f}\left.\left(\hat P(f,T,t,\vX_t)-\hat C(f,T,t,\vX_t)\right)\right|_{f=F_t}\\
        ={}&\fr{1}{F_t^2}\left.\fr{\pr}{\pr f}\left(f-F_t\right)\right|_{f=F_t}\\
        ={}&\fr{1}{F_t^2},
\end{align*}
for all $F_t$ by put-call parity.  Finally,
\begin{align}
     G_t(F_t)
     ={}&G_0(F_0)+\int_0^t\int_0^{F_s}\fr{d\hat{P}(k,T,s,\vX_s)}{k^2}m(dk) \nonumber\\ {}&+\int_0^t\int_{F_s}^\infty\fr{d\hat{C}(k,T,s,\vX_s)}{k^2}m(dk) -\fr{1}{2}\int_0^t\fr{1}{F_s^2}d\qv{F}{F}_s \nonumber\\
            ={}&G_0(F_0)+\int_0^t \int_{0}^\infty\fr{1}{k^2}{d\hat\Theta}(k,T,s,\vX_s)\,m(dk)
                -\fr{1}{2}\int_0^t\fr{1}{F_s^2}d\qv{F}{F}_s\nonumber\\
            ={}&G_0(F_0)+\int_0^t\int_0^\infty\fr{1}{k^2}\fr{\pr\hat \Theta}{\pr t}(k,T,s,\vX_s)\,m(dk)\,ds
            +\sum_{i=0}^d\int_0^t\int_0^\infty\fr{1}{k^2}\fr{\pr\hat \Theta}{\pr x_i}(k,T,s,\vX_s)\,m(dk)\,dX^i_s\nonumber\\
                {}&+\fr{1}{2}\sum_{i,j=0}^d\int_0^t\int_0^\infty\fr{1}{k^2}\fr{\pr^2\hat \Theta}{\pr x_i\pr x_j}(k,T,s,\vX_s)\,m(dk)\,d\qv{X^j}{X^i}_s-\fr{1}{2}\int_0^t\fr{1}{F_s^2}d\qv{F}{F}_s.\label{eqn:DR:GSIE}
\end{align}
%
%The last line is obtained through an application of stochastic Fubini's theorem and Lemma~\ref{lem:Defn:BR}.
Combining Equation~\eqref{eqn:DR:VIXinG} and Equation~\eqref{eqn:DR:GSIE} completes the proof.  The stochastic differential equation version of the result is stated in Equation~\eqref{eqn:DR:wtVIXsquared}.
\end{proof}

\section{Auxiliary Results}
\label{app:AL}
\begin{lem}(\cite{BaRu10})
\label{lem:Defn:BR}
    Let $f:\R_+\to\R$ be twice differentiable almost everywhere with respect to the Lebesgue measure, so that
\begin{equation*}
        \int_{m_1}^{m_2}|f''(k)|dk<+\infty,\,\,\forall m_1,\,\,m_2\in\R_+, \mbox{ s.t. } m_2>m_1>0,
\end{equation*}
and fix $y\in\R_+$. Then for any $x \in\R$,
\begin{equation*}
    \int_y^\infty(x-k)^+|f''(k)|dk<+\infty,
\end{equation*}
\begin{equation*}
    \int_0^y(k-x)^+|f''(k)|dk<+\infty
\end{equation*}
and
\begin{equation*}
    f(x)=f(y)+f'(y)(x-y)+\int_y^\infty(x-k)^+f''(k)dk+\int_0^y(k-x)^+f''(k)\,dk.
\end{equation*}
\end{lem}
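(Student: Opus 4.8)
The plan is to recognise the identity as the standard static-replication decomposition---writing a twice-differentiable payoff in terms of a unit bond, a forward, and a strip of out-of-the-money options---and to prove it by two elementary integrations by parts, using the observation that for each fixed $x$ precisely one of the two kernels $(x-k)^+$ and $(k-x)^+$ is not identically zero on its range of integration.

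First I would dispose of the two finiteness assertions. The map $k\mapsto(x-k)^+$ is supported on $\{k\le x\}$ and bounded there by $x^+$, so $\int_y^\infty(x-k)^+|f''(k)|\,dk$ is really an integral over the bounded set $[y,x]$ (empty, hence zero, when $x\le y$), which is finite by the hypothesis $\int_{m_1}^{m_2}|f''|<+\infty$; symmetrically $\int_0^y(k-x)^+|f''(k)|\,dk$ reduces to an integral over $[x,y]$. A mild additional decay of $|f''|$ near $k=0$ is needed when the relevant lower endpoint is $0$, which is precisely the role of the assumption, recalled earlier in the paper, that deep out-of-the-money puts decay fast enough as $k\searrow 0$; in the intended application $x=F_{t+\tau^*}(\omega)>0$, so this does not intervene. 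In particular all four integrals in the statement are then well defined, and the two in the identity are absolutely convergent.

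Next I would record the one genuinely non-trivial ingredient: twice differentiability almost everywhere together with local integrability of $f''$ on $(0,\infty)$ forces $f'$ to be absolutely continuous on compact subintervals of $(0,\infty)$, so that the fundamental theorem of calculus holds at both levels, $f'(b)-f'(a)=\int_a^b f''$ and $f(b)-f(a)=\int_a^b f'$. Granting this, the computation is elementary and splits on the sign of $x-y$. If $x\ge y$ then $(k-x)^+\equiv 0$ on $[0,y]$, the last integral drops, and $(x-k)^+=(x-k)\In_{[y,x]}(k)$ on $[y,\infty)$; integrating by parts,
\[
\int_y^x(x-k)f''(k)\,dk=\bigl[(x-k)f'(k)\bigr]_{k=y}^{k=x}+\int_y^x f'(k)\,dk=-(x-y)f'(y)+f(x)-f(y),
\]
and substituting into the right-hand side of the claimed formula collapses it to $f(x)$. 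If $x<y$ (with $x\ge 0$) then symmetrically $(x-k)^+\equiv 0$ on $[y,\infty)$ and $(k-x)^+=(k-x)\In_{[x,y]}(k)$ on $[0,y]$; integrating by parts,
\[
\int_x^y(k-x)f''(k)\,dk=\bigl[(k-x)f'(k)\bigr]_{k=x}^{k=y}-\int_x^y f'(k)\,dk=(y-x)f'(y)-f(y)+f(x),
\]
and again the right-hand side telescopes to $f(x)$. For $x<0$ the formula simply defines the linear extension of $f$ below the origin, a range which plays no role here.

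The main obstacle, such as it is, is the regularity bookkeeping in the middle step: one must ensure that $f'$ is honestly the indefinite integral of $f''$ on the relevant compact intervals, so that the two integrations by parts are legitimate. This is where ``twice differentiable almost everywhere'' has to be understood in the local $W^{2,1}$ sense supplied by the integrability hypothesis; pathological functions that are twice differentiable a.e.\ but carry a singular part in $f'$ are exactly what that hypothesis excludes. Everything after that point is a one-line telescoping identity in each of the two cases.
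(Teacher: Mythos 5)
The paper does not actually prove this lemma: it is imported from \cite{BaRu10} and stated in the appendix without argument, so there is no internal proof to compare against. Your proposal supplies the standard static-replication (Carr--Madan/Breeden--Litzenberger) derivation, and the computational core is sound: both finiteness claims do reduce to integrals of $|f''|$ over the compact intervals $[y,x]$ and $[x,y]$ respectively (with the caveat, which you correctly flag, that when $x\le 0$ the second integral reaches down to $k=0$ and is not covered by the hypothesis $m_1>0$; in the application $x=F_{t+\tau^*}(\omega)>0$ so this never arises), and the two integrations by parts telescope exactly as you write in each of the cases $x\ge y$ and $x<y$.

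The one step to treat more carefully is what you call the ``genuinely non-trivial ingredient.'' It is simply not true that twice differentiability a.e.\ together with local integrability of $f''$ forces $f'$ to be absolutely continuous: the Cantor function is differentiable a.e.\ with derivative $0$ a.e., hence ``twice differentiable a.e.''\ with $f''=0$ a.e.\ and $\int|f''|=0<\infty$, yet the fundamental theorem of calculus fails for it. So the integrability hypothesis does not \emph{exclude} the pathological functions, contrary to your closing sentence; under a literal reading of the hypotheses the identity is false, and one must instead read the regularity assumption (as \cite{BaRu10} in effect does) as $f$ being $C^1$ with $f'$ locally absolutely continuous, i.e.\ $f\in W^{2,1}_{\mathrm{loc}}(0,\infty)$. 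You half-acknowledge this, but you attribute the exclusion of bad cases to the hypothesis rather than to a strengthened reading of it. Once that reading is adopted, your argument is complete and is the standard proof of this decomposition.
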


\begin{lem}(Ito-Ventsel Formula)
    \label{lem:VIXSDE:ItoV}
    Let $G_t(x)$ be a family of stochastic processes, continuous in $(t,x)\in(\R^+\times\R^d)$ $\mP$-a.s. satisfying:
    \begin{enumerate}[(i)]
        \item For each $t>0$, $x\to G_t(x)$ is $\mathcal{C}^2$ from $\R^d$ to $\R$.
        \item For each $x$, $(G_t(x),t\ge 0)$ is a continuous semi-martingale
        \begin{equation*}
                dG_t(x) = \sum_{j=1}^n g_t^j(x)dM_t^j
        \end{equation*}
        where $M^j$ are continuous semi-martingales, and $g^j(x)$ are stochastic processes continuous in $(t,x)$, such that $\forall s>0$, $x\to g^j_s(x)$ are $\mathcal{C}^2$ maps, and $\forall x$, $g^j(x)$ are adapted processes.
    \end{enumerate}
    Let $X = (X^1,...,X^d)$ be a continuous semi-martingale. Then
    \begin{align*}
        G_t(X_t) ={}& G_0(X_0) +\sum_{j=1}^n\int_0^tg^j_s(X_s)dM_s^j
                    +\sum_{i=1}^d\int_0^t\fr{\pr G_s}{\pr x_i}(X_s)dX^i_s\\
            {}&+\sum_{i=1}^d\sum_{j=1}^n\int_0^t \fr{\pr g^j_s}{\pr x_i}(X_s)d\qv{M^j}{X^i}_s+\fr{1}{2}\sum_{i,k=1}^d\int_0^t\fr{\pr^2 G_s}{\pr x_i \pr x_k}(X_s)d\qv{X^k}{X^i}_s.
    \end{align*}
\begin{proof}
    The version stated above is taken from \cite{JeYC09}. For the original result see \cite{Vent65}.
\end{proof}
\end{lem}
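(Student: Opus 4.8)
The plan is to establish the formula by a joint time--space discretisation that interleaves the semimartingale dynamics of the random field $G$ (movement in the time variable) with the classical multidimensional It\^o formula for the frozen-time map $G_{t}(\cdot)$ (movement in the space variable), all under a preliminary localisation. The one genuinely new feature, relative to the ordinary It\^o formula, is the cross-variation term $\sum_{i,j}\int_0^t\fr{\pr g^j_s}{\pr x_i}(X_s)\,d\qv{M^j}{X^i}_s$: it records the correlation between the martingales $M^j$ driving the field and the semimartingale $X$, and it vanishes precisely when $G$ is deterministic. Identifying this term correctly is the heart of the argument.

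First I would localise. By introducing a sequence of stopping times $\tau_n\uparrow T^*$ one reduces to the situation in which $X$, the quadratic variations $\qv{X^i}{X^l}$ and $\qv{M^j}{X^i}$, the fields $G_s(x)$, $g^j_s(x)$ and all the spatial derivatives appearing in the statement are uniformly bounded on the compact sets visited before $\tau_n$. This turns every stochastic integral below into a square-integrable object and lets all limits be taken in $L^2$ or in probability; the general statement follows by letting $n\to\infty$. Then, fixing $t$ and a partition $0=t_0<\cdots<t_m=t$ of mesh $\delta$, I telescope
\begin{equation*}
G_t(X_t)-G_0(X_0)=\sum_{p}\Big([G_{t_{p+1}}(X_{t_{p+1}})-G_{t_p}(X_{t_{p+1}})]+[G_{t_p}(X_{t_{p+1}})-G_{t_p}(X_{t_p})]\Big).
\end{equation*}

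The second bracket moves only the spatial argument at the frozen time $t_p$, where $G_{t_p}(\cdot)$ is an (ordinary, non-random once $t_p$ is fixed) $\mathcal{C}^2$ map; the classical multidimensional It\^o formula applies to it, and replacing the frozen time $t_p$ by the running time $s$ inside the resulting integrals---justified by joint continuity of $(s,x)\mapsto\fr{\pr G_s}{\pr x_i}(x)$ and $(s,x)\mapsto\fr{\pr^2 G_s}{\pr x_i\pr x_l}(x)$ together with $L^2$-continuity of the It\^o integral---produces in the limit the terms $\sum_i\int_0^t\fr{\pr G_s}{\pr x_i}(X_s)\,dX^i_s$ and $\tfrac12\sum_{i,l}\int_0^t\fr{\pr^2 G_s}{\pr x_i\pr x_l}(X_s)\,d\qv{X^l}{X^i}_s$.

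The first bracket is the hard part. By hypothesis (ii) it equals $\sum_j\int_{t_p}^{t_{p+1}}g^j_s(X_{t_{p+1}})\,dM^j_s$, but the integrand anticipates, since $X_{t_{p+1}}$ is not $\filt{s}$-measurable for $s<t_{p+1}$. I would Taylor-expand in the spatial variable,
\begin{equation*}
g^j_s(X_{t_{p+1}})=g^j_s(X_s)+\sum_i\fr{\pr g^j_s}{\pr x_i}(X_s)\,(X^i_{t_{p+1}}-X^i_s)+R^j_{p,s},
\end{equation*}
so that the zeroth-order term already sums exactly to $\sum_j\int_0^t g^j_s(X_s)\,dM^j_s$, while the first-order term $\sum_{i,j}\int_{t_p}^{t_{p+1}}\fr{\pr g^j_s}{\pr x_i}(X_s)\,(X^i_{t_{p+1}}-X^i_s)\,dM^j_s$, summed over the partition, should converge to the cross-variation $\sum_{i,j}\int_0^t\fr{\pr g^j_s}{\pr x_i}(X_s)\,d\qv{M^j}{X^i}_s$. \textbf{The main obstacle is exactly this identification}: one must show that integrating against the \emph{remaining future increment} $X^i_{t_{p+1}}-X^i_s$ reproduces the covariation of $M^j$ and $X^i$ with coefficient $+1$, which I would do by a stochastic integration-by-parts on each subinterval followed by passage to the limit, using the boundedness secured above and the convergence of realised covariation sums to $\qv{M^j}{X^i}$. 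The remainders $R^j_{p,s}=O(|X_{t_{p+1}}-X_s|^2)$, controlled by $\sup|\pr^2 g^j|$ (this is where the hypothesis $g^j\in\mathcal{C}^2$ is used), vanish in probability as $\delta\to0$ by the standard bound on the sum of squared semimartingale increments. Collecting the limits of the two brackets yields the asserted identity.
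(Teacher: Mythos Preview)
The paper does not actually prove this lemma: its ``proof'' is a bare citation to Jeanblanc--Yor--Chesney and to Ventzel's original paper, treating the It\^o--Ventzel formula as a known auxiliary result. Your sketch, by contrast, outlines an actual argument, and the strategy you describe---localisation, a telescoping split into a ``time'' increment and a ``space'' increment along a partition, the ordinary multidimensional It\^o formula for the space increment, and a Taylor expansion plus covariation identification for the time increment---is precisely the classical route found in those references. So there is nothing to compare beyond noting that you are supplying what the paper only cites.

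The one place where your sketch deserves slightly more care is the first bracket, which you correctly flag as anticipating. The expression $\int_{t_p}^{t_{p+1}} g^j_s(X_{t_{p+1}})\,dM^j_s$ is not an It\^o integral as written; what makes it meaningful is that, under the $\mathcal{C}^2$ hypothesis on $g^j$ together with Kolmogorov's continuity criterion, the map $x\mapsto\int_{t_p}^{t_{p+1}} g^j_s(x)\,dM^j_s$ admits a jointly continuous version into which the random point $X_{t_{p+1}}$ may be substituted after the integration. With that interpretation fixed, your integration-by-parts on each subinterval goes through and produces the cross-variation term with the correct sign. This is not a gap in the logic so much as a technical point worth making explicit; once stated, the argument is complete and matches the textbook proofs.
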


\bibliographystyle{plainnat}
\small{\bibliography{references}}

\begin{thebibliography}{32}
\providecommand{\natexlab}[1]{#1}
\providecommand{\url}[1]{\texttt{#1}}
\expandafter\ifx\csname urlstyle\endcsname\relax
  \providecommand{\doi}[1]{doi: #1}\else
  \providecommand{\doi}{doi: \begingroup \urlstyle{rm}\Url}\fi

\bibitem[Baldeaux and Badran(2014)]{BB}
J.~Baldeaux and A.~Badran.
\newblock Consistent modelling of {VIX} and equity derivatives using a 3/2 plus
  jumps model.
\newblock \emph{Applied Mathematical Finance}, 2014.
\newblock DOI:10.1080/1350486X.2013.868631.

\bibitem[Baldeaux and Rutkowski(2010)]{BaRu10}
J.~Baldeaux and M.~Rutkowski.
\newblock Static replication of forward-start claims and realized-variance
  swaps.
\newblock \emph{Applied Mathematical Finance}, 17\penalty0 (2):\penalty0
  99--131, 2010.

\bibitem[Bergomi(2005)]{Bergomi05}
L.~Bergomi.
\newblock Smile dynamics {II}.
\newblock \emph{Risk}, 18:\penalty0 67--73, 2005.

\bibitem[Bergomi(2008)]{Bergomi08}
L.~Bergomi.
\newblock Smile dynamics {III}.
\newblock \emph{Risk}, pages 90--96, October 2008.

\bibitem[Brace et~al.(2001)Brace, Goldys, Klebaner, and Womersley]{BGKW01}
A.~Brace, B.~Goldys, F.~Klebaner, and R.~Womersley.
\newblock Market model of stochastic implied volatility with application to the
  {BGM} model.
\newblock \emph{Working Paper: University of New South Wales}, 2001.

\bibitem[Buehler(2006)]{Buehler06}
H.~Buehler.
\newblock Consistent variance curve models.
\newblock \emph{Finance and Stochastics}, 10:\penalty0 178--203, 2006.

\bibitem[Carr and Sun(2007)]{CarrSu07}
P.~Carr and J.~Sun.
\newblock A new approach for option pricing under stochastic volatility.
\newblock \emph{Review of Derivatives Research}, 10:\penalty0 87--150, 2007.

\bibitem[CBOE(2009)]{CBOE09}
CBOE.
\newblock The {CBOE} volatility index - {VIX}.
\newblock \emph{White Paper}, 2009.

\bibitem[Cox and Hobson(2005)]{CoHo05}
A.~Cox and D.~Hobson.
\newblock Local martingales, bubbles and option prices.
\newblock \emph{Finance and Stochastics}, 9\penalty0 (4):\penalty0 477--492,
  2005.

\bibitem[Drimus and Farkas(2012)]{DF12}
G.~Drimus and W.~Farkas.
\newblock Local volatility of volatility for the {VIX} market.
\newblock \emph{Working Paper}, 2012.

\bibitem[Dupire(1993)]{Dupi93}
B.~Dupire.
\newblock Model art.
\newblock \emph{Risk}, pages 118--124, 1993.

\bibitem[Gatheral(2008)]{Gath08}
J.~Gatheral.
\newblock Consistent modeling of {SPX} and {VIX} options.
\newblock In \emph{The Fifth World Congress of the Bachelier Finance Society},
  2008.

\bibitem[Goard and Mazur(2013)]{GoardMaz12}
J.~Goard and M.~Mazur.
\newblock Stochastic volatility models and the pricing of {VIX} options.
\newblock \emph{Mathematical Finance}, 23\penalty0 (3):\penalty0 439--458,
  2013.

\bibitem[Grunbichler and Longstaff(1996)]{GrLo96}
A.~Grunbichler and F.~Longstaff.
\newblock Valuing futures and options on volatility.
\newblock \emph{Journal of Banking \& Finance}, 20\penalty0 (6):\penalty0 985
  -- 1001, 1996.

\bibitem[Heath et~al.(1992)Heath, Jarrow, and Morton]{HeJM92}
D.~Heath, R.~Jarrow, and A.~Morton.
\newblock Bond pricing and the term structure of interest rates: A new
  methodology for contingent claims valuation.
\newblock \emph{Econometrica: Journal of the Econometric Society}, pages
  77--105, 1992.

\bibitem[Heston(1993)]{Heston93}
S.~L. Heston.
\newblock A closed-form solution for options with stochastic volatility with
  applications to bond and currency options.
\newblock \emph{Review of Financial Studies}, 6:\penalty0 327--343, 1993.

\bibitem[Jeanblanc et~al.(2009)Jeanblanc, Yor, and Chesney]{JeYC09}
M.~Jeanblanc, M.~Yor, and M.~Chesney.
\newblock \emph{Mathematical methods for financial markets}.
\newblock Springer, 2009.

\bibitem[Kaeck and Alexander(2010)]{KaAl10}
A.~Kaeck and C.~Alexander.
\newblock {VIX} dynamics with stochastic volatility of volatility.
\newblock \emph{ICMA Centre, Henley Business School, University of Reading,
  UK}, 2010.

\bibitem[Karatzas and Shreve(1991)]{KaratzasShreve00}
I.~Karatzas and S.~Shreve.
\newblock \emph{Brownian Motion and Stochastic Calculus}.
\newblock Springer-Verlag, New York, 1991.

\bibitem[Lian and Zhu(2013)]{LianZhu11}
G.-H. Lian and S.-P. Zhu.
\newblock Pricing {VIX} options with stochastic volatility and random jumps.
\newblock \emph{Decisions in Economics and Finance}, 36\penalty0 (1):\penalty0
  71--88, 2013.

\bibitem[Lin(2007)]{Lin07}
Y.-N. Lin.
\newblock Pricing {VIX} futures: Evidence from integrated physical and
  risk-neutral probability measures.
\newblock \emph{Journal of Futures Markets}, 27\penalty0 (12):\penalty0
  1175--1217, 2007.

\bibitem[Morton(1988)]{Mort88}
A.~Morton.
\newblock Arbitrage and martingales.
\newblock Technical report, Cornell University Operations Research and
  Industrial Engineering, 1988.

\bibitem[Papanicolaou and Sircar(2013)]{PaSi13}
A.~Papanicolaou and R.~Sircar.
\newblock A regime-switching {H}eston model for {VIX} and {S\&P} 500 implied
  volatilities.
\newblock \emph{Quantitative Finance}, 2013.
\newblock DOI:10.1080/14697688.2013.814923.

\bibitem[Psychoyios et~al.(2010)Psychoyios, Dotsis, and Markellos]{PsDM10}
D.~Psychoyios, G.~Dotsis, and R.~Markellos.
\newblock A jump diffusion model for {VIX} volatility options and futures.
\newblock \emph{Review of Quantitative Finance and Accounting}, 35:\penalty0
  245--269, 2010.

\bibitem[Roper(2010)]{Rope10}
M.~Roper.
\newblock Arbitrage free implied volatility surfaces.
\newblock \emph{University of Sydney, Working Paper}, 2010.

\bibitem[Sch{\"o}nbucher(1999)]{Scho99}
P.~Sch{\"o}nbucher.
\newblock A market model for stochastic implied volatility.
\newblock \emph{Philosophical Transactions of the Royal Society of London.
  Series A: Mathematical, Physical and Engineering Sciences}, 357\penalty0
  (1758):\penalty0 2071--2092, 1999.

\bibitem[Schweizer and Wissel(2008)]{ScWi08}
M.~Schweizer and J.~Wissel.
\newblock Arbitrage-free market models for option prices: The multi-strike
  case.
\newblock \emph{Finance and Stochastics}, 12\penalty0 (4):\penalty0 469--505,
  2008.

\bibitem[Sepp(2008)]{Sepp08b}
A.~Sepp.
\newblock {{VIX}} option pricing in a jump-diffusion model.
\newblock \emph{Risk}, pages 84--89, 2008.

\bibitem[Sepp(2011)]{Sepp11}
A.~Sepp.
\newblock Parametric and non-parametric local volatility models: Achieving
  consistent modeling of {VIX} and equity derivatives.
\newblock In \emph{Quant Congress Europe, London Conference Presentation},
  2011.

\bibitem[Ventzel(1965)]{Vent65}
A.~Ventzel.
\newblock On equations of the theory of conditional markov processes.
\newblock \emph{Theory of Probability and its Applications}, 10:\penalty0
  357--361, 1965.

\bibitem[Zhang and Zhu(2006)]{ZhangZhu06}
J.~Zhang and Y.~Zhu.
\newblock {VIX} futures.
\newblock \emph{Journal of Futures Markets}, 26:\penalty0 521--531, 2006.

\bibitem[Zhu and Lian(2012)]{ZhuLian11}
S.-P. Zhu and G.-H. Lian.
\newblock An analytical formula for {VIX} futures and its applications.
\newblock \emph{Journal of Futures Markets}, 32\penalty0 (2):\penalty0
  166--190, 2012.

\end{thebibliography}

\end{document}